\documentclass[11pt]{article}
    \usepackage[margin=2cm]{geometry}
    \usepackage{times,bbm,microtype}
    \usepackage{amsmath,amsthm,amssymb,stmaryrd,xspace,comment,booktabs, enumerate}
    \usepackage[T1]{fontenc}
    
    \usepackage[dvipsnames]{xcolor}
    \usepackage{algorithm}
    \usepackage[noend]{algpseudocode}
    \usepackage{thmtools,thm-restate}
    \usepackage{url}
    \usepackage[sort]{cite}
    \usepackage[linktocpage=true,
	pagebackref=true, colorlinks, linkcolor=BrickRed, citecolor=blue]{hyperref}
    \usepackage{cleveref}
    \usepackage{tikz} 
    \usepackage{caption}
    \usepackage[framemethod=tikz]{mdframed}
    \mdfdefinestyle{thmstyle}{
  backgroundcolor=gray!10,
  linewidth=0pt,
  roundcorner=4pt,
  innertopmargin=\topsep,
  innerbottommargin=\topsep,
  skipabove=\baselineskip,
  skipbelow=\baselineskip
}

    \title{The Power of the Score Sequence of a Tournament}
    \author{Prantar Ghosh\footnote{(pghosh@tntech.edu) Department of Computer Science, Tennessee Tech University, USA. Funded in part by startup support provided by the Department of Computer Science and the College of Engineering at Tennessee Tech. Part of this work was done while the author was visiting the University of Birmingham, UK, hosted by Sagnik Mukhopadhyay.} \and Sahil Kuchlous\footnote{(sahilkuchlous@gmail.com) Work done in part while the author was an undergraduate student at Harvard University, USA.} \and Shravan Mehra\footnote{(shravansinghmehra@gmail.com) School of Computer Science, University of Birmingham, UK} \and Sagnik Mukhopadhyay\footnote{(schwagznikst@gmail.com), School of Computer Science, University of Birmingham, UK. Partially supported by UKRI New Investigator Award EP/X03805X/2.}
    }

    \date{}
    \usepackage{amsmath,amsthm,amssymb,stmaryrd,xspace,comment,booktabs}
 \usepackage{thmtools, mdframed}
 \mdfdefinestyle{thmstyle}{
  backgroundcolor=gray!10,
  linewidth=0pt,
  roundcorner=4pt,
  innertopmargin=\topsep,
  innerbottommargin=\topsep,
  skipabove=\baselineskip,
  skipbelow=\baselineskip
}

\newcommand{\xqedhere}[2]{\rlap{\hbox to#1{\hfil\llap{\ensuremath{#2}}}}}

\newcommand{\eps}{\varepsilon}

\newcommand{\ola}{\textsf{OLA}}

\newcommand{\NN}{\mathbb{N}}

\newcommand{\bx}{\mathbf{x}}

\newcommand{\cC}{\mathcal{C}}

\newcommand{\cF}{\mathcal{F}}
\newcommand{\cG}{\mathcal{G}}

\newcommand{\cP}{\mathcal{P}}

\newcommand{\cU}{\mathcal{U}}

\newcommand{\tO}{\widetilde{O}}
\newcommand{\tE}{\widetilde{E}}

\newcommand{\mypar}[1]{\medskip\noindent{\bfseries #1.}~}

\newtheorem{theorem}{Theorem}[section]
\newtheorem{lemma}[theorem]{Lemma}

\newtheorem{corollary}[theorem]{Corollary}

\newtheorem{fact}[theorem]{Fact}

\newtheorem{remark}[theorem]{Remark}
\theoremstyle{definition}

\renewcommand{\geq}{\geqslant}

\DeclareMathOperator{\poly}{poly}
\DeclareMathOperator{\polylog}{polylog}

\newcommand{\din}{d_{\text{in}}}
\newcommand{\dout}{d_{\text{out}}}

\newcommand{\prantar}[1]{
}
\newcommand{\sahil}[1]{
}
\newcommand{\shravan}[1]{
}
\newcommand{\sagnik}[1]{
}

\newcommand{\inCut}{\overleftarrow{\cC}}

\newcommand{\outCut}{\overrightarrow{\cC}}
\newcommand{\Ubar}{\overline{U}}
\newcommand{\Fbar}{\overline{F}}

\newcommand{\dem}{\mathsf{dem}}
\newcommand{\skel}{\mathsf{S}}
\newcommand{\degs}{\mathbf{d}}
\newcommand{\SFM}{\mathsf{SFM}}

\begin{document}

   \maketitle
   
    \begin{abstract}
\textit{What problems can one solve on a tournament if only its score sequence is known?}

Tournaments are oriented complete graphs that form an extensively-studied class of directed graphs (digraphs), both from combinatorial and algorithmic perspectives. Over the years, researchers have identified multiple classical digraph problems that can be solved on a tournament from only its score sequence (indegree sequence). These problems include {\em acyclicity testing} and {\em topological sorting} [Chakrabarti, Ghosh, McGregor, and Vorotnikova; SODA'20], {\em $s,t$-reachability}, {\em strong connectivity}, and decomposition into {\em strongly connected components (SCC)} [Ghosh and Kuchlous; ESA'24], and vertex-ordering problems such as {\em cutwidth} and {\em optimal linear arrangement} [Barbero, Paul, and Pilipczuk; ICALP'17]. These prior works showed the sufficiency of the score sequence by designing distinct algorithms for the individual problems. In this work, we give a simple unified framework that solves all these problems using only indegrees and, in fact, completely characterises the class of problems that is determined by the indegree information: problems whose answers are invariant under cycle reversals.  

As a byproduct of our results, we obtain algorithms for a variety of connectivity-based, cut-based, and vertex-ordering problems on tournaments and ``almost tournaments'' in the streaming, the two-player communication, and the cut-query models of computation. Some of these algorithms match existing optimal bounds and others provide bounds improving the state of the art. Specifically, our polynomial-time algorithms for almost tournaments improve upon the exponential-time algorithms of Ghosh and Kuchlous and have much simpler analyses.  

The said characterisation is a special case of a much more general result that we establish: for any {\em arbitrary} digraph, the knowledge of its skeleton (underlying undirected graph) and the vertex indegrees completely determines its properties that are invariant under cycle reversal. In particular, this gives us an $O(n^2)$-cut-query algorithm to solve directed minimum cut on $n$-node graphs in polynomial time, a significant result that has been observed in the literature but not concretely stated in this form.
Our results also unveil interesting general connections between two well-studied sublinear models for graph problems: semi-streaming and cut-query. 
\end{abstract}

    \newpage

    \newpage
\section{Introduction}\label{sec:intro}

Tournaments are directed graphs (digraphs) where each pair of nodes shares exactly one directed edge. They have a rich literature that dates back more than half a century (see the book on tournaments by Moon \cite{Moon1968TopicsOT}). In particular, the {\em score sequence} of a tournament, which is the sequence of its indegrees, has been a topic of extensive study \cite{ ClaessonDF23, Kleitman1981ForestsAS,  Moon_1962,Narayana_Bent_1964, RIORDAN1968, stockmeyer2023counting}. It has been noted sporadically in the literature that multiple classical digraph problems can be solved on a tournament from only its indegree sequence. For instance, Chakrabarti, Ghosh, McGregor, and Vorotnikova \cite{ChakrabartiGMV20} observed that the acyclicity of a tournament and the topological ordering of an acyclic tournament are completely determined by its score sequence: for the former, check if the indegree sequence is a permutation of $\{0,\ldots, n-1\}$, and for the latter, sort the nodes by indegree. Again, it is known that sorting by indegree yields optimal solutions for two widely-studied vertex ordering problems: {\em cutwidth} (as shown by Fradkin \cite{Fradkin-thesis11}) and {\em optimal linear arrangement} (shown by Barbero, Paul, and Pilipczuk~\cite{BarberoPP17}); see \Cref{sec:probs} for the definitions of these problems. A recent and more surprising result of Ghosh and Kuchlous \cite{GhoshK24} says that the indegree information also determines $s,t$-reachability, strong connectivity, and, in general, all strongly connected components (SCCs) of a tournament. This raises an interesting combinatorial question: 
\vspace{-1mm}

\begin{center}
{\em Can we characterise the class of problems determined by the score sequence of a tournament?}    
\end{center}

\vspace{-1mm}
    
We answer this question in the affirmative and show that this class is {\em precisely} the set of all digraph problems whose answers are invariant under cycle reversal, i.e., whose answers remain the same even if we flip any number of cycles in the graph. Then, given only the score sequence, we always have an algorithm to solve such a problem, even if at the cost of exponential runtime: brute force over all possible tournaments on the given number of nodes to find one satisfying the given score sequence and solve the problem on that input using a classical algorithm. Can we, however, always guarantee a {\em polynomial-time} algorithm? We answer this in the affirmative as well, provided that the problem admits a polynomial-time solution in the classical model. Indeed, we achieve this by designing a polynomial-time (in fact, almost-linear-time) algorithm to generate a tournament satisfying a given indegree sequence. We thus have the following theorem.

\vspace{-2mm}

\begin{mdframed}[style=thmstyle]
 \begin{restatable}{theorem}{indegt} \label{thm:tour-comb}
    The class of graph problems determined by the indegree sequence of a tournament is precisely the class of problems whose answers are invariant under cycle reversal. Moreover, each problem can be determined with an additive overhead of almost-linear time ($n^{2+o(1)}$ time for an $n$-node tournament) on its classical time-complexity.
\end{restatable}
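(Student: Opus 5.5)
The plan is to prove the combinatorial characterisation first and then make it algorithmic. The characterisation rests on one structural lemma.

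\emph{Two tournaments $T_1,T_2$ on the same vertex set have the same indegree sequence if and only if $T_2$ is obtained from $T_1$ by reversing a collection of edge-disjoint directed cycles.}

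The ``if'' direction is immediate: reversing a directed cycle preserves every vertex's indegree, since each vertex on the cycle loses one in-edge and gains one. For ``only if'', let $D$ be the set of pairs oriented differently in $T_1$ and $T_2$, viewed as directed edges with their $T_1$ orientation. Reversing all of $D$ turns $T_1$ into $T_2$. Since indegrees agree, every vertex has equal in- and out-degree inside $D$, so $D$ is Eulerian. An Eulerian digraph decomposes into edge-disjoint directed cycles, and reversing these one at a time is legitimate: after reversing some of them, the remaining ones are still directed cycles of the current tournament, because their edges were untouched. This gives the ``only if'' direction. It also holds for arbitrary digraphs with a fixed skeleton, which is the general version mentioned in the abstract.

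From the lemma, the characterisation follows in two directions.
\begin{itemize}
\item If a problem's answer is invariant under cycle reversal, then all tournaments with a given score sequence have the same answer, so the answer is a function of the score sequence.
\item Conversely, suppose the answer is determined by the score sequence (taken over tournaments). Reversing a cycle of a tournament yields a tournament with the same score sequence, so the answer is invariant under cycle reversal within the class of tournaments, which is the class the statement concerns.
\end{itemize}
Here I need to fix the definitions: a problem is ``determined'' by indegrees if any two tournaments with equal score sequences have equal answers, and ``invariant'' means invariant under single cycle reversals. The lemma turns one notion into the other.

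For the algorithmic part, given a score sequence $d_1,\dots,d_n$, I will construct in $n^{2+o(1)}$ time some tournament $T'$ realising it, and then run the classical algorithm on $T'$. Invariance guarantees that the answer on $T'$ equals the answer on the hidden tournament. I will set up realisation as a flow problem:
\begin{itemize}
\item a source has an edge to each pair node $\{u,v\}$ with capacity $1$;
\item each pair node $\{u,v\}$ has edges to $u$ and to $v$ with capacity $1$; routing the unit to $v$ means the edge is oriented toward $v$, i.e.\ $u\to v$;
\item each vertex $v$ has an edge to the sink with capacity $d_v$.
\end{itemize}
The input is promised to be a valid score sequence, so a flow saturating all $\binom n2$ source edges exists. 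Flow integrality then yields an integral flow, which is a valid orientation. The network has $O(n^2)$ nodes and edges, so the almost-linear-time max-flow algorithm of Chen et al.\ gives $n^{2+o(1)}$ time. A simpler alternative is a greedy construction: repeatedly take the vertex of smallest remaining required indegree and orient its edges appropriately, in the style of Landau's theorem / Havel--Hakimi. Fixing that argument needs an exchange step, so I will default to the flow construction.

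The main obstacle I expect is the cycle-decomposition step, specifically arguing that sequential reversal stays within tournaments and yields exactly $T_2$. Using edge-disjointness as above settles it. The rest is routine. I will also remark that the running-time claim is additive: construction plus the classical algorithm on an $n$-vertex tournament.
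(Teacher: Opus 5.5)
Your proof is correct. The characterisation half is exactly the paper's argument: \Cref{lem:main}~(1) specialised to the complete skeleton (the pairs on which the two tournaments disagree form an Eulerian subdigraph, hence a union of edge-disjoint cycles), followed by the same short completeness argument. Your observation that the cycles can be reversed one at a time without leaving the class of tournaments makes explicit a step the paper leaves implicit.

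Where you genuinely differ is the realisation step. The paper proves \Cref{thm:gendig-comb} first and then specialises. It solves a circulation with vertex demands $\dem(u)=\dout(u)-\din(u)$ directly on the skeleton, with $f(u,v)\in\{-1,0,1\}$. It orients the edges carrying flow, and then has to orient the leftover zero-flow edges along Euler tours. Your network instead routes one unit per pair through an auxiliary pair node to its chosen head, with capacity $d_v$ into the sink. Since a maximum flow saturates every source edge, every pair gets oriented and no Euler-tour clean-up is needed. The cost is $\binom n2$ auxiliary nodes, but the network still has $O(n^2)$ arcs, so the $n^{2+o(1)}$ bound is unchanged. This is the standard pairs-to-vertices orientation network, and it extends to arbitrary skeletons as well, provided doubled pairs ($2$-cycles) are pre-oriented as the paper does.

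Two details to write out:
\begin{enumerate}
\item Each $v$ receives indegree exactly $d_v$, not merely at most $d_v$. This holds because the flow value is $\binom n2=\sum_v d_v$, so every sink edge is saturated.
\item If you want the overhead to be deterministic, as in \Cref{lem:main}~(2), invoke the deterministic almost-linear-time flow algorithm \cite{Brand0PKLGSS23} rather than a randomized one.
\end{enumerate}
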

 \end{mdframed}

 \vspace{-1mm}

Since all problems mentioned in the first paragraph can be easily shown to belong to this class, our result generalises all of them. To solve those problems using the indegree information, prior work proposed different algorithms to process the indegrees, tailored according to the respective problems. \Cref{thm:tour-comb} gives a simple \textit{unified} polynomial-time framework for all those problems since all of them admit polynomial-time algorithms in the classical setting where the entire input can be accessed.

\mypar{Generalisation for Arbitrary Digraphs} Since tournaments form a rather narrow subclass of digraphs, it is natural to ask: {\em can we generalise the result to a broader class of graphs?} \Cref{thm:tour-comb} is indeed a special case of our more general theorem: all problems whose answers are invariant under cycle reversal can be solved on any general digraph only from the knowledge of its underlying undirected graph a.k.a.~{\em skeleton} and indegree sequence. 

\vspace{-2mm}

\begin{mdframed}[style=thmstyle]
 \begin{restatable}{theorem}{maincomb} \label{thm:gendig-comb}
     The class of graph problems determined by the skeleton and indegree sequence of a directed graph is precisely the class of problems whose answers are invariant under cycle reversal. Moreover, each problem can be determined with an additive overhead of almost-linear time ($m^{1+o(1)}$ time for an $m$-edge graph) on its classical time-complexity.
\end{restatable}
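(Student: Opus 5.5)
The proof has two parts: a combinatorial characterisation and an algorithmic construction. The characterisation says that two orientations $D_1,D_2$ of the same skeleton $G$ have the same indegree sequence if and only if $D_2$ is obtained from $D_1$ by reversing a collection of directed cycles.

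For the ``if'' direction, reversing a directed cycle preserves every indegree. Each vertex on the cycle loses one in-edge (its predecessor edge becomes outgoing) and gains one (its successor edge becomes incoming).

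For the ``only if'' direction, let $F$ be the set of edges oriented differently in $D_1$ and $D_2$, viewed with their $D_1$ orientations. Since $\din^{D_1}(v)=\din^{D_2}(v)$, in $F$ every vertex $v$ has as many $F$-edges entering it in $D_1$ as leaving it. Indeed, flipping an in-edge of $v$ lowers its indegree by one and flipping an out-edge raises it by one. So $F$ is an Eulerian (balanced) subdigraph of $D_1$. By the standard cycle decomposition of balanced digraphs, $F$ splits into edge-disjoint directed cycles of $D_1$. Reversing them one at a time turns $D_1$ into $D_2$, and each cycle stays directed at its turn because the cycles are edge-disjoint.

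It follows that any problem invariant under cycle reversal takes the same answer on every orientation consistent with the given skeleton and indegrees, so it is determined by them. For the converse, suppose a problem is determined by (skeleton, indegrees). Reversing a cycle preserves both, so the answer cannot change, and the problem is invariant under cycle reversal. This gives the ``precisely'' statement.

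Algorithmically, given $G$ and a target $\degs$, we must construct in $m^{1+o(1)}$ time some orientation $D$ of $G$ with $\din^D=\degs$. Such an orientation exists because the input graph is one. We then run the classical algorithm on $D$, and the characterisation guarantees the answer is correct. I would phrase the construction as a bipartite flow (or $b$-matching) problem. There is a source $s$, one node per edge $e=\{u,v\}$ with capacity $1$ from $s$, arcs from $e$ to $u$ and to $v$, and arcs from each vertex $v$ to the sink with capacity $\degs(v)$. A flow of value $m$ exists, and by integrality some maximum flow is integral; it assigns each edge its head. The network has $O(m)$ arcs, so the almost-linear-time max-flow algorithm of Chen, Kyng, Liu, Peng, Probst Gutenberg and Sachdeva yields the orientation in $m^{1+o(1)}$ time. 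This is additive overhead on the classical running time. For tournaments, $m=\binom n2$ gives $n^{2+o(1)}$, recovering \Cref{thm:tour-comb} with $G=K_n$.

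The only step needing care is the balanced-difference argument, in particular checking that the cycles in the decomposition are directed consistently in $D_1$; this is exactly the in/out balance of $F$ at each vertex. The algorithmic step is a black-box invocation of almost-linear max-flow, and I do not expect difficulty there. One could alternatively start from an arbitrary orientation and fix imbalances by augmenting-path reversals, but that seems harder to bound in time.
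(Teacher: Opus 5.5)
Your characterisation and converse are the paper's argument (\Cref{lem:main}~(1) and the completeness paragraph before \Cref{cor:maincombnec}): the disagreeing edges $F$ are balanced at every vertex, so they split into edge-disjoint directed cycles.

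The construction is where you take a different route. The paper works on the skeleton itself. It solves a circulation with vertex demands $\dem(u)=\dout(u)-\din(u)$ and capacities $|f(u,v)|\le 1$, using one $s,t$ max-flow on $V\cup\{s,t\}$. It orients the flow-carrying edges along the flow, then orients the zero-flow edges along Euler tours; this is possible because every vertex has even degree there. Your network has one node per edge and chooses each edge's head directly, the classical orientation-by-$b$-matching construction. Since $\sum_v\degs(v)=m$, a flow of value $m$ saturates every sink arc, so an integral maximum flow already is the orientation. You need no Euler-tour step and no parity argument. Both networks have $O(m+n)$ arcs and give the $m^{1+o(1)}$ overhead. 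Yours is slightly more self-contained. The paper's has only $n+2$ nodes and is set up to also report infeasible $(\skel,\degs)$; you could do that by checking the flow value, but the theorem does not need it.

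You do need to handle $2$-cycles. In this paper the skeleton is a multiset, so a directed $2$-cycle $\{(u,v),(v,u)\}$ appears as two parallel copies of $\{u,v\}$. Your network gives them two independent edge-nodes, so an integral flow may send both units to $u$, producing two parallel arcs $(v,u)$. That is not a digraph in $\cG[\skel,\degs]$, and neither a classical algorithm nor the problem itself need behave correctly on it. The paper orients every parallel pair as a $2$-cycle before running the flow. You can do the same and decrement $\degs(u)$ and $\degs(v)$ by one. Alternatively, merge the two copies into one edge-node with source capacity $2$ and capacity-$1$ arcs to $u$ and to $v$.

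Separately, if you want the overhead to be deterministic, as the paper's streaming and communication corollaries require, use the deterministic almost-linear flow algorithm of \cite{Brand0PKLGSS23} instead of the randomized one you cite.
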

 \end{mdframed}

 \vspace{-1mm}

Since the skeleton of a tournament is the complete graph, its score sequence suffices.

\mypar{Applications: Sublinear Algorithms}  In the modern world of massive graphs, a natural goal is to analyse them by extracting as little information as possible from the graphs. Tournaments, which have the maximum possible edges, are prime examples of such graphs. They appear in the real world when, for example, there is a pairwise ranking or preference (represented by directed edges) among all elements in a dataset \cite{GassTournaments, GengLQALS08, Rosti-etal-2007-combining}. Common questions that arise here include detecting whether the rankings are consistent (acyclicity testing), and if so, ordering the elements in increasing order of preference (topological sorting), whether an element $t$ is (directly or indirectly) preferred over another element $s$ ($s,t$-reachability), or partitioning the elements into groups such that there is a total ordering of preference among the groups (SCC decomposition). 

This motivates the study of {\em streaming algorithms} for tournaments \cite{ BawejaJW22, ChakrabartiGMV20, ChenCLT26Independence, GhoshK24}. These algorithms make one or a few passes over the edges while maintaining a small summary and then process the summary to generate an output. They usually aim for {\em semi-streaming} space, i.e., $\tO(n):= O(n\poly(\log n))$ space for $n$-node graphs.\footnote{Most graph problems turn out to be intractable when restricted to smaller memory \cite{feigenbaumkmsz05}.} Observe that problems completely determined by the vertex indegrees admit single-pass semi-streaming algorithms: while processing the stream, we just keep count of the indegrees using $O(n\log n)$ space. By \Cref{thm:tour-comb}, we thus get semi-streaming algorithms for all cycle-reversal-invariant problems on tournaments, and they also carry over to a couple of other popular models for sublinear graph algorithms: (a) the {\em cut-query model} where we solve a problem by optimising queries to an oracle that returns the (directed) cut size for any subset of nodes, i.e., the number of edges incoming to that subset, and (b) the {\em two-player communication model} where the input graph is split between two players who need to communicate (possibly in multiple rounds of interaction) as few bits as possible to solve the problem. For (a), we need $n$ (singleton) cut queries to find all vertex indegrees. For (b), it suffices for one player, say Alice, to communicate to the other player, say Bob, $O(n\log n)$ bits that encode the indegrees restricted to her edges, since Bob can then compute the score sequence of the entire tournament by adding the indegrees induced by his edges. Hence, we have the following theorem.

\vspace{-2mm}

\begin{mdframed}[style=thmstyle]
 \begin{restatable}{theorem}{cycinvgen}\label{thm:cycinv}
    Given any digraph problem $\cP$ whose answer is invariant under cycle reversal, there is a single-pass $O(n\log n)$-space streaming algorithm, an $O(n)$-cut-query algorithm, and an $O(n\log n)$-bit communication protocol for the problem $\cP$ on $n$-node tournaments. Each algorithm's runtime has an additive $n^{2+o(1)}$-factor overhead on the runtime of a classical algorithm for $\cP$.
\end{restatable}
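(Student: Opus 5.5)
The plan is to reduce \Cref{thm:cycinv} to \Cref{thm:tour-comb}. In each of the three models we only need to obtain the exact indegree sequence $(\din(v))_{v\in V}$ of the input tournament $T$. After that, the rest is the same offline computation in every model. By \Cref{thm:tour-comb}, in $n^{2+o(1)}$ time we construct a tournament $T'$ with the same score sequence as $T$. We then run the classical algorithm for $\cP$ on $T'$ and output its answer.

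Correctness rests on the characterisation in \Cref{thm:tour-comb}. Because $\cP$ is invariant under cycle reversal, its answer is determined by the score sequence. So any tournament $T'$ with the same score sequence gives the same answer as $T$. Equivalently, $T$ and $T'$ differ by a set of edge reversals that decomposes into directed cycles, because $T\triangle T'$ is balanced: every vertex has equal in- and outdegree within the set of edges on which $T$ and $T'$ disagree. The running time is the $n^{2+o(1)}$ spent building $T'$ plus the classical running time of $\cP$, which is exactly the claimed additive overhead.

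The three models then need only a short argument each for collecting the indegrees.
\begin{itemize}
\item \emph{Streaming:} keep one counter per vertex and increment the counter of $v$ whenever an edge $(u,v)$ arrives. Each counter is at most $n-1$, so this is a single pass in $O(n\log n)$ space.
\item \emph{Cut queries:} query the singleton set $\{v\}$ for every $v$. The number of edges entering $\{v\}$ is $\din(v)$, so $n$ queries suffice.
\item \emph{Communication:} Alice sends, for each $v$, the indegree of $v$ in her edge set, using $O(\log n)$ bits per vertex. Bob adds the indegrees from his own edges. The edge partition is disjoint and together covers $T$, so Bob obtains the exact score sequence with $O(n\log n)$ bits in total.
\end{itemize}

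Almost all the weight of the argument sits in \Cref{thm:tour-comb}, which we are assuming. The only point that needs care is checking that the score sequence is genuinely realisable, so that the construction of $T'$ succeeds. This holds automatically, since the indegrees we collect come from an actual tournament. The $n^{2+o(1)}$ runtime is also inherited directly.
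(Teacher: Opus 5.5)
Your proposal is correct and matches the paper's proof essentially step for step. In both, the score sequence is obtained from per-vertex counters in a single pass, from $n$ singleton cut queries, or from Alice's $O(n\log n)$-bit message of her partial indegrees. \Cref{thm:tour-comb} then handles the rest: build a tournament with the same score sequence in $n^{2+o(1)}$ time, run the classical algorithm on it, and use cycle-reversal invariance for correctness.
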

\end{mdframed}

\vspace{-2mm}

In fact, Chakrabarti et al.~\cite{ChakrabartiGMV20} and Ghosh and Kuchlous \cite{GhoshK24} discovered their aforementioned results while studying streaming algorithms for tournaments and consequently obtained single-pass semi-streaming algorithms for acyclicity testing, topological sorting, $s,t$-reachability, strong connectivity, and SCC decomposition. Our result implies such algorithms for all these problems, and also the first single-pass semi-streaming algorithms for the fundamental problems of directed global-minimum cut and minimum $s,t$-cut on tournaments. Notably, although there is a substantial body of work on \textit{undirected} min-cut in the streaming and communication models \cite{ApersEGLMN22, AssadiCK19, AssadiD21, BlikstadBEMN22, JiangNS26, Kenneth-MordochK26STOC,  MukhopadhyayN20, RubinsteinSW18, Zelke11}, the complexity of directed (global or $s,t$) min-cut was unknown, even for tournaments. In particular, we obtain the following corollary.

\vspace{-1mm}

\begin{restatable}{corollary}{tourimps}\label{cor:algs}
    There is a single-pass $O(n\log n)$-space streaming algorithm, an $O(n)$-cut-query algorithm, and an $O(n\log n)$-bit communication protocol for the following problems on $n$-node tournaments, each of which runs in polynomial time.
    \begin{enumerate}[(i)]
    \item Connectivity-based problems: $s,t$-reachability, strong connectivity, SCC decomposition (or more generally finding the SCC-graph);
    \item DAG-based problems: acyclicity testing, topological sorting;
    \item Vertex ordering problems: optimal linear arrangement (\ola), cutwidth;
    \item Directed cut problems: global minimum cut, minimum $s,t$-cut.
    \end{enumerate}

\noindent
    All these bounds are optimal (up to a $\log n$ factor).
\end{restatable}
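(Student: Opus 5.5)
The plan has two parts. For the upper bounds, I apply \Cref{thm:cycinv} to each listed problem, which reduces to showing that every answer is invariant under cycle reversal. For the lower bounds, I give an $\Omega(n)$-bit communication lower bound for each problem and transfer it to the streaming and cut-query models. I will also check that each listed problem has a polynomial-time classical algorithm, so that by \Cref{thm:cycinv} the total running time stays polynomial.

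Invariance is checked category by category. For (i), reversing a directed cycle $C$ does not change reachability. If a path from $u$ to $v$ used an edge $(a,b)$ of $C$, that edge becomes $(b,a)$, but $b$ still reaches $a$ along the rest of the reversed cycle. So reachability between every pair is preserved, and with it strong connectivity, the SCCs, and the SCC-graph. For (ii), a DAG has no cycles, so acyclicity testing is trivially invariant. For topological sorting, I answer cycle-invariantly by outputting ``not acyclic'' when cycles exist, and the topological order of a DAG is then unaffected because no reversal applies. For (iii) and (iv), the key fact is that reversing a directed cycle $C$ preserves every directed cut value $|\delta^{\text{in}}(S)|$. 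The edges of $C$ cross the boundary of $S$ alternately in and out, so $C$ contributes equally many edges into $S$ as out of $S$, and reversal swaps these two counts. Hence global min cut and min $s,t$-cut are invariant. For an ordering $\sigma$, the cost of \ola{} and of cutwidth depends only on the backward-edge counts across each prefix cut, which are directed cut values of prefix sets. So the objective of every ordering, and therefore the optimum together with an optimal ordering, is invariant. Every listed problem has a classical polynomial-time algorithm: BFS, Tarjan, and max-flow for min cuts. For \ola{} and cutwidth on tournaments, sorting by indegree is optimal by \cite{Fradkin-thesis11,BarberoPP17}. Applying \Cref{thm:cycinv} then gives the upper bounds.

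For optimality, the communication lower bound implies the streaming bound, because a $p$-pass streaming algorithm with space $s$ yields a protocol using $O(ps)$ bits. It also implies the cut-query bound, since each cut value on a tournament has $O(\log n)$ bits and the players can simulate a query by exchanging their partial counts. So an $\Omega(n)$ communication bound gives $\Omega(n/\log n)$ queries, which matches $O(n)$ up to a $\log n$ factor. To get $\Omega(n)$ communication, I reduce from set disjointness or index on about $n$ bits. Start from a transitive tournament on vertices $v_1,\dots,v_n$ and embed a bit $x_i$ in the orientation of a few edges near positions $2i$, so that, together with a bit $y_i$ held by the other player, a local 3-cycle forms or an ordering changes exactly when $x_i=y_i=1$. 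That controls acyclicity, and hence strong connectivity and min-cut value (a zero in-cut exists iff the graph is not strongly connected). Topological sorting and ordering outputs must encode $\Omega(n)$ bits of the input. In particular, outputs in (ii) and (iii) have $\Omega(n\log n)$ bits of entropy, which gives a trivial output-size lower bound for the one-way setting.

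The main obstacle is the lower bound: a single gadget family has to cover every listed problem, in particular $s,t$-reachability and min $s,t$-cut with fixed $s,t$. For these I would chain disjointness gadgets along a path of blocks, so that $t$ reaches $s$ iff some coordinate intersects. If known bounds from \cite{ChakrabartiGMV20,GhoshK24} already cover the connectivity problems, I would cite them and only add reductions for the cut and ordering problems.
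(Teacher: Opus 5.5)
Your upper-bound half matches the paper. You check that reachability (hence SCCs and the SCC-graph), acyclicity, and every directed cut value survive cycle reversal, and then apply \Cref{thm:cycinv}. Cut invariance covers both min-cut values and every cutwidth and \ola{} objective. Citing indegree-sorting for polynomial-time cutwidth/\ola{} on tournaments is a useful detail the paper leaves implicit. One small slip: once $(a,b)$ becomes $(b,a)$, it is $a$ that reaches $b$ along the rest of the reversed cycle, not the other way round.

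The lower-bound half has genuine gaps. The step ``that controls acyclicity, and hence strong connectivity and min-cut value'' does not follow. A transitive backbone with local triangles is never strongly connected once there are two blocks, because no edge enters the first block. So your construction carries no information about strong connectivity or global min cut. Your reachability chain is unspecified, and this is exactly where tournament gadgets are delicate: any two gadget vertices are adjacent, so a path can jump between blocks. For example, take the obvious OR-gadget with $t\to a_i$ iff $x_i=1$ (Alice) and $a_j\to s$ iff $y_j=1$ (Bob). Then $t$ reaches $s$ whenever $x_i=y_j=1$ and $a_i\to a_j$ for some $i\neq j$, which is not disjointness. Similarly, if your acyclicity gadgets ``at positions $2i$'' share endpoints, then $y_i=x_{i+1}=1$ already closes a triangle, so use vertex-disjoint triples.

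The paper builds no gadgets. It imports the $\Omega(n)$ randomized, interactive bounds of \cite{GhoshK24} for all connectivity- and DAG-based problems and reduces everything else to them:
\begin{itemize}
\item global min cut is $0$ iff the tournament is not strongly connected;
\item min $s,t$-cut is $0$ iff $t$ does not reach $s$;
\item cutwidth and the \ola-value are $0$ iff the tournament is acyclic;
\item an optimal cutwidth or \ola{} ordering of an acyclic tournament is a topological order.
\end{itemize}
Your ``fallback'' of citing is therefore the actual proof. Make it the main route and drop the acyclicity-implies-connectivity step.

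Two further points concern the ordering problems. First, your output-entropy argument covers only \emph{ordering} outputs. Cutwidth as defined in the paper, and the \ola{} value version, are $O(\log n)$-bit numbers, so you still need the ``value is $0$ iff acyclic'' reduction for them. Second, you state the counting bound ``for the one-way setting''. Your transfer to cut queries goes through a multi-round simulation, so a one-way bound (or a reduction from INDEX) does not give the cut-query claim.

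The counting argument is in fact valid well beyond one-way. Let Alice hold all edges. Then any randomized interactive transcript, and likewise any cut-query decision tree with at most $n^2$ possible answers per query, must distinguish the $n!$ topological orders of transitive tournaments. This gives $\Omega(n\log n)$ bits and $\Omega(n)$ queries for topological sorting and the ordering versions. That is tight, and sharper than the $\Omega(n)$-bit and $\Omega(n/\log n)$-query bounds the paper's proof invokes.
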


 We remark that having the same ``sketch'' of the tournament---the indegree-count---for a large variety of problems is very convenient from a practical perspective: the sketch is deterministic, can be computed very easily and quickly, and rather than storing separate summaries based on the respective problems at hand, we can just post-process this single sketch accordingly upon receiving the respective queries.  

\mypar{Extension to Almost Tournaments} Again, a natural question about generalisability arises: {\em can we extend these streaming algorithms to a broader class of graphs?} Indeed, we cannot fully generalise them since most of these problems are hard in streaming (i.e., require $\Omega(n^2)$ space for a single pass) for general digraphs \cite{ChakrabartiGMV20, feigenbaumkmsz05}. However, Ghosh and Kuchlous \cite{GhoshK24} showed that they are tractable when the digraph is an ``almost tournament'': for digraphs {\em $k$-close to a tournament}, i.e., digraphs that can be obtained by deleting at most $k$ edges from a tournament, they gave single-pass $\tO(n+k)$-space streaming algorithms for the connectivity-based problems mentioned above. Our general theorem for arbitrary digraphs (\Cref{thm:gendig-comb}) implies these algorithms and extends them to any cycle-reversal-invariant problem: we can compute the skeleton of such a graph using $O(k\poly(\log n))$ space since, as shown by the said work, we can recover the $k$ non-edges using this space. Then, standard streaming-to-communication simulations imply $\tO(n+k)$-bit communication protocols for any such problem. Again, the idea can be extended to the cut-query model to obtain similar bounds: the $k$ non-edges can be retrieved using $O(k \log n)$ cut queries by standard techniques. We thus get the following corollary.

\begin{restatable}{corollary}{kcloseimps}\label{cor:kclose}
     Given any digraph problem $\cP$ whose answer is invariant under cycle reversal, there is a single-pass $\tO(n+k)$-space streaming algorithm, an $\tO(n+k)$-cut-query algorithm, and an $\tO(n+k)$-communication protocol for $\cP$ on $n$-node digraphs that are $k$-close to tournaments. Each algorithm's runtime has an additive $m^{1+o(1)}$-factor overhead on the runtime of a classical algorithm for $\cP$, where $m$ is the number of edges in the digraph.
\end{restatable}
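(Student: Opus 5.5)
The plan is to reduce \Cref{cor:kclose} to \Cref{thm:gendig-comb}. That theorem says any cycle-reversal-invariant problem $\cP$ is determined by the skeleton together with the indegree sequence, and that from these one can compute, in $m^{1+o(1)}$ time, a digraph with the same skeleton and indegrees on which $\cP$ has the same answer. So in each model it suffices to recover two things: the indegree sequence and the skeleton. For a digraph $G$ that is $k$-close to a tournament, the skeleton is the complete graph $K_n$ minus at most $k$ pairs (the non-edges). Hence recovering the skeleton amounts to recovering this set $N$ of at most $k$ non-adjacent pairs. After that we run the algorithm of \Cref{thm:gendig-comb} on $(K_n\setminus N,\degs)$ and then the classical algorithm for $\cP$, which gives the stated additive $m^{1+o(1)}$ overhead.

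\textbf{Streaming.} Indegrees are counted exactly with $O(n\log n)$ space. For the non-edges, I would use the result of Ghosh and Kuchlous cited in the introduction: a single pass with $O(k\polylog n)$ space suffices to recover the $k$ non-edges. To make this self-contained, one can view the set of unordered pairs as a vector of length $\binom n2$ initialised to all ones. Each streamed edge $(u,v)$ decrements coordinate $\{u,v\}$. At the end, the support is exactly $N$, which has size at most $k$, and it is recovered by a deterministic $k$-sparse recovery sketch (or a linear sketch of the positions) using $O(k\polylog n)$ space. A single pass suffices, and the total is $\tO(n+k)$.

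\textbf{Communication.} This follows by the standard simulation. Alice runs the one-pass streaming algorithm on her edges and sends the memory state, $\tO(n+k)$ bits, to Bob. Bob continues the stream on his own edges and then post-processes.

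\textbf{Cut queries.} We first obtain all indegrees with $n$ singleton queries. Then we find $N$. With $\din(S)$ and $\dout(S)$ queries, the number of edges inside a set $S$ is computable from the $n$ degrees and one cut query: since $G$ is a digraph without antiparallel edges, $e(S)=\sum_{v\in S}\din(v)-\din(S)$. The number of non-edges inside $S$ is then $\binom{|S|}{2}-e(S)$. Likewise, the number of non-edges between disjoint sets $A$ and $B$ is $|A||B|-e(A,B)$, where $e(A,B)=e(A\cup B)-e(A)-e(B)$. This gives a counting oracle for non-edges in any rectangle $A\times B$. We then locate each non-edge by adaptive binary search/group testing: recursively halve the vertex sets, discard parts with count zero, and continue. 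Each of the at most $k$ non-edges is isolated after $O(\log n)$ levels, for $O(k\log n)$ queries in total. Together with the $n$ singleton queries this is $\tO(n+k)$.

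\textbf{Main obstacle.} The main point needing care is this cut-query step: the search must be charged correctly so that the query count is $O(k\log n)$ rather than $O(k\log^2 n)$. Either bound is fine for $\tO$. The streaming recovery is taken from the prior work.
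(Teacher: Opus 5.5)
Your proposal is correct and follows the paper's proof. In each model you recover the indegrees and the at most $k$ non-edges: sparse recovery in streaming, and the standard Alice-to-Bob simulation for communication. You then apply \Cref{thm:gendig-comb}.

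The only variation is in the cut-query step. The paper turns two directed cut queries into an undirected cut query on the complement $\overline{\skel}$, via $|U|\,|\Ubar| - \inCut_G(U) - \inCut_G(\Ubar)$, and then invokes \Cref{fact:graph-req-cq} as a black box. You instead build a non-edge counting oracle for sets and rectangles from the indegrees and run the $O(k\log n)$ binary search yourself. Both routes give $O(n+k\log n)$ queries.
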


Furthermore, we get an \textit{exponential improvement in runtime} over \cite{GhoshK24}'s algorithm for almost tournaments. Their runtime for $s,t$-reachability and strong connectivity is $\tO(2^{\min(k, n)})$ and that for SCC decomposition or SCC-graph construction is $\tO(2^n)$. Our algorithm is not only polynomial time for any $k$, but it also has much simpler analysis, especially for the SCC decomposition algorithm where their analysis is somewhat intricate.

\mypar{Other Applications} We remark that our general theorem (\Cref{thm:gendig-comb}) for arbitrary digraphs implies the following results in the cut-query model.

\begin{mdframed}[style=thmstyle]
 \begin{restatable}{corollary}{quadquery} \label{cor:quadcutq}
     For any directed graph, global- and $s,t$-minimum cut can be solved with $O(n^2)$ directed cut queries and polynomial runtime.
\end{restatable}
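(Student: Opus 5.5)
Apply \Cref{thm:gendig-comb}: if we learn the skeleton and the indegree sequence of $G$ with $O(n^2)$ cut queries, then any cycle-reversal-invariant problem can be solved in polynomial time. Directed global and $s,t$-minimum cut are such problems. We show this first, then the query bound.

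\textbf{Invariance.} Let $C$ be a directed cycle in $G$ and $S \subseteq V$ a vertex set. Walking around $C$, the number of times it leaves $S$ equals the number of times it enters $S$. Let $a$ be the number of edges of $C$ directed from $\overline{S}$ into $S$ and $b$ the number directed out of $S$; then $a=b$. Reversing $C$ turns the $a$ incoming edges into outgoing ones and vice versa, so the in-cut $|\delta^{\mathrm{in}}(S)|$ is unchanged. Hence every directed cut value is invariant under cycle reversal. This gives invariance of the minimum $s,t$-cut value and of the cuts achieving it, and likewise for the global minimum cut. (Reversing several cycles one after another changes nothing either.) So \Cref{thm:gendig-comb} applies: from the skeleton and indegrees we can build, in $m^{1+o(1)}$ time, some digraph $G'$ with the same cut function. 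We then run a classical polynomial-time max-flow algorithm on $G'$, or $n$ max-flow calls for the global minimum cut.

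\textbf{Recovering indegrees and skeleton with $O(n^2)$ queries.} The indegree of $v$ is the cut query on $\{v\}$, so $n$ queries suffice. For the skeleton we must decide, for every pair $u,v$, whether some edge joins them. Write $q(S)$ for the number of edges entering $S$. Then
\[
q(\{u\})+q(\{v\})-q(\{u,v\}) = \#\{\text{edges } v\to u\}+\#\{\text{edges } u\to v\},
\]
because the edges between $u$ and $v$ are counted in the singleton queries but not in the query on $\{u,v\}$. This quantity is nonzero exactly when $uv$ is an edge of the skeleton, and it costs one extra query per pair, $\binom{n}{2}$ in total. If parallel edges or 2-cycles are allowed, the expression also tells us the multiplicity. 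A 2-cycle is itself a cycle, so its reversal is immaterial. To stay within the model of \Cref{thm:gendig-comb} (an undirected multigraph skeleton), I would check that the theorem allows multiplicities; otherwise I would treat each pair of antiparallel edges as contributing a fixed $+1$ to both indegrees, which removes them without changing any cut difference. The total is $n+\binom{n}{2}=O(n^2)$ queries.

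\textbf{Main obstacle.} The step I expect to need the most care is making sure the statement of \Cref{thm:gendig-comb} really covers the problems we want: that ``determined'' includes returning the optimal cut sets and not only the cut values, and that multigraphs are handled. Cut invariance itself is the easy half of the argument above, since it holds set by set and therefore preserves the minimisers as well.
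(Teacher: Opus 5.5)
Your proposal is correct and is essentially the paper's proof: $n$ singleton queries give the indegrees, $O(n^2)$ queries give the skeleton, \Cref{lem:main}~(2) builds some $G'\sim G$, and a classical min-cut algorithm on $G'$ suffices because every directed cut value agrees on $G$ and $G'$. The differences are minor. You recover the skeleton explicitly via $q(\{u\})+q(\{v\})-q(\{u,v\})$ rather than citing \Cref{fact:graph-req-cq}, which for a directed oracle tacitly uses $\inCut_G(U)+\inCut_G(V\setminus U)$ as the undirected cut. You also prove cut invariance by counting a cycle's crossings rather than via the two linear equations of \Cref{cor:cut-invariance}. Your multigraph worry is moot, since the paper's skeleton is a multiset and \Cref{lem:main}~(2) orients doubled edges both ways; keeping the $2$-cycles in $G'$ is in fact necessary, because dropping one would change every cut separating its endpoints.
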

 \end{mdframed}

 Indeed, the directed global-minimum cut problem in the cut-query model is a special instance of the fundamental submodular function minimisation ($\SFM$) problem since the cut function (directed or undirected) is submodular.\footnote{A function $f : 2^\cU \to \mathbb{R}$ is \textit{submodular} iff for 
any $S, T \subseteq \cU$, we have
$f(S) + f(T) \geq f(S \cup T) + f(S \cap T)$.} The $\SFM$ problem asks to minimise a submodular function $f : 2^\cU \to \mathbb{R}$ when given access to an oracle that, upon query $S\subseteq \cU$, returns the value of $f(S)$. There has been a long line of work \cite{AxelrodLS20, ChakrabartyGJS22, ChakrabartyGJS23, ChakrabartyLS0W19, Cun85, Cunningham85, FleischerI03, GLS1988, Iwata03,  IwataFF01, IwataO09, Jiang23, LeeSW15, Orlin09, Schrijver00, Vygen03} to pin down the exact complexity of $\SFM$. The current best algorithm due to Jiang \cite{Jiang21} makes $\tO(n^2)$ queries (here, $n:=|\cU|$) but has exponential runtime. The best-known (strongly) polynomial-time algorithm has query complexity $\tO(n^3)$, also by \cite{Jiang21}. Therefore, these algorithms for $\SFM$ do not imply \Cref{{cor:quadcutq}}. Again, while the undirected analogue of \Cref{{cor:quadcutq}} is trivial \cite{RubinsteinSW18}, the directed version is not; since unlike undirected graphs, it is provably impossible to recover a directed graph exactly using $O(n^2)$ cut queries.\footnote{In fact, the impossibility holds for any number of cut queries; to wit, a directed cycle and its reversal are distinct labelled graphs, but all cut queries have identical answers on both of them.} \Cref{cor:quadcutq}, however, follows from some observations in prior work (\cite{Cun85}, Section 2) but was not stated in this form, which we believe is worth stating explicitly in light of the discussion above. 

Finally, in \Cref{app:conn}, we show that our results establish an interesting connection between the streaming and the cut-query models. For tournament problems, the existence of a cut-query algorithm (irrespective of the number of queries) implies the existence of a semi-streaming algorithm (\Cref{lem:qry-solve-semi-str}). Thus, strong streaming lower bounds can be a useful tool for proving intractability in the cut query model. Again, since this implies that the semi-streaming model is at least as (algorithmically) strong as the cut-query model for tournament problems, a natural question is whether it is \textit{strictly} stronger: is there a tournament problem that is undecidable by cut queries, but solvable by a semi-streaming algorithm? We answer this question in the affirmative (\Cref{prop:qrystrsep}).

\subsection*{Related Work and Comparisons}


\mypar{Streaming and Communication} 
Apart from the results mentioned above, Chakrabarti et al.~\cite{ChakrabartiGMV20} obtained algorithms and lower bounds for approximate feedback arc set in tournaments (FAST) and for source/sink-finding, both in adversarial-order and random-order streams. Baweja, Jia, and Woodruff \cite{BawejaJW22} then improved \cite{ChakrabartiGMV20}'s results for FAST. They also gave $(p+1)$-pass $O(n^{1+1/p})$-space algorithms for SCC decomposition (also implying an algorithm for reachability), strong connectivity, and Hamiltonian path on tournaments. Ghosh and Kuchlous \cite{GhoshK24} improved upon their results for the connectivity-based problems by designing single-pass semi-streaming algorithms for each of them. They also gave new algorithms for Hamiltonian cycle and FAST, and established lower bounds for multiple problems including (exact and approximate) FAST, $s,t$-distance, source/sink finding, and the connectivity and DAG-based problems. All these works considered the respective tournament problems in the two-party communication model in order to establish lower bounds. Mande, Paraashar, Sanyal, and Saurabh~\cite{MandePSS24} recently studied the communication complexity of multiple tournament problems including king finding, source finding, and maximum degree, giving both upper and lower bounds.

The connectivity- and DAG-based problems on \textit{general digraphs} have been studied in the streaming and communication models by multiple works \cite{AssadiJJST22, AssadiR20, BawejaJW22, ChakrabartiGMV20, ChenKPS0Y21, feigenbaumkmsz05, GuruswamiO13}. These works mostly studied lower bounds for the problems, with the best-known ones being $\Omega(n^2)$ space for a single pass, $\Omega(n^{2-o(1)})$ space for $O(\sqrt{\log n})$ passes, and $\Omega(n^{1+\Omega(1/p)})$ space for $p$ passes, thereby needing at least $\Omega(\log n/\log \log n)$ passes for semi-streaming algorithms. The  $s,t$-reachability problem is the only one among them (to the best of our knowledge) that has a non-trivial pass upper bound for semi-streaming algorithms:  $O(n^{1/2+o(1)})$ passes suffice \cite{AssadiJJST22}. On arbitrary graphs under W-streams (allowing outputs to be streamed), Laura and Santaroni \cite{lauraS11} studied SCC decomposition, obtaining an $O(n)$-pass $O(n\log n)$-space algorithm.

\mypar{Cut query} Graur et al.~\cite{GraurPRW20} showed that a directed graph can be fully learned up to cycle reversals via $O(n^2)$ cut queries,\footnote{We came to know via personal communication that an independent unpublished work by Chakrabarty, Chen, and Khanna \cite{ChakrabartyCK22} also obtained a similar result.} which implies exponential-time $O(n^2)$-cut-query algorithms for directed global- and $s,t$-minimum cut. They, however, did not mention a concrete polynomial-time algorithm that generates such a digraph, which would have then implied our \Cref{cor:quadcutq}. Cunningham \cite{Cun85} proved a statement similar to \Cref{cor:quadcutq} by a reduction to the min-cut problem on an appropriately constructed digraph, but the construction is very different from ours. We are not aware of any prior work that studied tournaments in the cut-query model.    
The \textit{undirected} cut-query model has been recently studied extensively \cite{ApersEGLMN22, BlikstadBEMN22, ChakrabartyL23, GraurPRW20, JiangNS26, Kenneth-MordochK25ESA, Kenneth-MordochK26SODA, Kenneth-MordochK26STOC, LeeSW15, MukhopadhyayN20,  RubinsteinSW18}.

\mypar{Other settings} Frank and Gy\'{a}rf\'{a}s \cite{FrankGyarfas1978} studied the problem of orienting the edges of an undirected graph so as to satisfy target upper and lower bounds on the vertex in- and out-degrees, a generalised version of our main problem (\Cref{lem:main}). They mentioned an algorithm for this problem that flips directed paths and also remarked that the problem can be solved using network flows, but did not specify the algorithm. While they might have alluded to an algorithm similar to ours, the final result we get for general digraphs (\Cref{thm:gendig-comb}) is novel. We also note that Coppersmith et al.~\cite{CoppersmithFR10} proved that the indegree sequence of a tournament suffices to achieve a $5$-approximation to FAST. The problems of optimal linear arrangement (\ola) and cutwidth have been widely studied from the perspective of computational complexity as well as fixed-parameter tractability (FPT) \cite{BarberoPP17, bfis25, devos2025maximumlineararrangementproblem, fp13, Fradkin-thesis11}. 

    \section{Preliminaries}\label{sec:prelims}

\mypar{Notation and Terminology} We state some notation and terminology that we use throughout the paper. The notation $\tO(f)$ hides factors polylogarithmic in $f$. For $a\in \NN$, we have $[a]:=\{1,\ldots,a\}$. Unless otherwise specified, a general digraph is denoted by $G = (V, E)$ where $|V| = n$ and $|E| = m$. The {\em skeleton} of a digraph is its underlying undirected graph, and is denoted by $\skel = (V,\tE)$ where $\tE := \{\{u, v\}: (u, v)\in E\}$ is a multiset—allowing $\skel$ to contain up to two parallel edges (corresponding to directed $2$-cycles in $G$). 
Given a subset of edges $F\subseteq \tE$ in the skeleton of $G$, the notation $F[G]$ denotes the directed subgraph of $G$ induced by $F$ (i.e., edges in $F$ oriented as in $G$). 
For a vertex $v$, we use $\din(v, G)$ and $\dout(v, G)$ to denote the indegree and outdegree of $v$ respectively in the digraph $G$. We simply write $\din(v)$ and $\dout(v)$ if the graph $G$ is clear from the context.
The vector $\degs$ is an $n$-length sequence of non-negative integers and may denote the indegree sequence of a graph, given by $\langle \din(1),\ldots, \din(n)\rangle$ where the vertex set is $[n]$. We use $\cG[\skel, \degs]$ to denote the equivalence class of all graphs that have skeleton $\skel$ and indegree sequence $\degs$. We say that two graphs $G_1$ and $G_2$ are related, i.e., $G_1 \sim G_2$ iff both of them have the same skeleton and indegree sequence. 
Unless otherwise specified, a {\em cut} refers to a directed cut. 
Given a vertex subset $U$ of a digraph, the cut size of $U$, i.e., the number of edges incoming to $U$, is given by $\inCut_G(U):=|\{(v,u) \in E(G) : u\in U\}|$. We often refer to cut sizes simply as ``cuts''. For a directed graph $G$, we define its {\em cut profile} as the set $\{(U,\inCut(U)): U\subseteq V\}$.  The number of edges outgoing from $U$ is given by  $\outCut_G(U):=|\{(u,v) \in E(G) : u\in U\}|$.

\subsection{Models}\label{sec:models}

We describe the algorithmic models that appear in this work. 

\mypar{The Directed Cut-Query Model} The {\em undirected} cut-query model was introduced by Rubinstein, Schramm, and Weinberg~\cite{RubinsteinSW18}. Here, we define the analogous {\em directed} cut-query model \cite{ChakrabartyCK22, GraurPRW20}. We are given query access to an oracle that holds a digraph $G=(V,E)$, where we know $V$ but not $E$. Upon a query $U\subseteq V$, the oracle returns $\inCut_G(U)$, the number of edges incoming to $U$ in $G$. The goal is to solve a problem on $G$ by making as few queries as possible.

Equivalently, we could define the model as returning $\outCut_G(U)$, the number of edges outgoing from $U$, but since $\inCut_G(U) = \outCut_G\left(V\setminus U\right)$, WLOG and incurring at most a factor of $2$ in the number of queries, we can assume that we have access to just one type of queries. We drop the subscript $G$ if the graph is clear from the context.

\mypar{The Graph Streaming Model} In this paper, we only focus on the insert-only streaming setting and define only this variant. Here, we have an input digraph $G=(V,E)$, where the vertex set $V=[n]$ and the elements of the edge set $E$ arrive one by one. The input graph and the stream order are chosen adversarially (worst case). The goal is to make one or a few passes over the stream elements, store as few bits of information as possible, and after the final pass, run a post-processing algorithm on the stored data to generate an output. The goal is to minimise the space usage and the number of passes. The semi-streaming model \cite{feigenbaumkmsz05} restricts the space usage to $\tO(n)$. In this work, we only deal with single-pass algorithms.

\mypar{Two-Party Communication} We study graph problems in the two-player communication model of Yao \cite{yao79}. Here, players Alice and Bob both know the vertex set $V$ and receive disjoint edge sets $E_A, E_B \subseteq V\times V$ respectively. They send messages to each other (possibly over multiple rounds) to compute an element of $\cF(G)$, where $\cF$ is a  predetermined relation defined on all graphs on $V$, and $G$ is the graph $(V, E_A\sqcup E_B)$. The \emph{communication cost} of a protocol for $\cF$ is defined as the maximum number of bits exchanged by the players, over all possible inputs $(E_A,E_B)$. If the protocol is randomised, then the maximum is also taken over all possible random strings used by the players. The {\em communication complexity} of a relation $\cF$ is the minimum communication cost over all protocols that return a valid output for $\cF$ on all inputs. If the protocol is randomised, we require the answer to be correct on all inputs with probability at least $2/3$ over the randomness of the players' random strings. All our protocols in this paper are deterministic and single-round, i.e., Alice sends a single message to Bob, following which he returns the output. Our lower bounds hold even for the stronger model allowing randomness and unlimited rounds of interaction.

\subsection{Basic Tools}\label{sec:tools}

\begin{fact}[Sparse recovery \cite{gilbertI10, DasV13}]\label{fact:sp-rec}
    Given streaming updates to a vector $\bx\in [-M,M]^N$, there is a deterministic $O(k\cdot\polylog (M,N))$-space algorithm that, at the end of the stream, recovers $\bx$ in polynomial time if $\bx$ has at most $k$ non-zero entries.
\end{fact}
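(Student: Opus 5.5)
The plan is to use a \emph{linear} deterministic sketch built from a Vandermonde (Reed--Solomon parity-check) matrix, and to decode it with classical syndrome decoding. First fix a prime $p$ with $p > \max(N, 2M+1)$; by Bertrand's postulate $p = O(N+M)$, so every element of $\mathbb{F}_p$ fits in $O(\log(M+N))$ bits. Identify the coordinate $i\in[N]$ with the nonzero field element $\alpha_i := i \in \mathbb{F}_p$; these are distinct because $p>N$. The algorithm maintains the $2k$ power sums
\[
  s_j \;:=\; \sum_{i=1}^{N} x_i\,\alpha_i^{\,j} \pmod p, \qquad j = 0,1,\ldots,2k-1 .
\]
Each $s_j$ is a linear function of $\bx$. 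An update $x_i \leftarrow x_i + \Delta$ therefore changes it by $\Delta\alpha_i^{j}$, and this increment can be computed with fast exponentiation. The total space is $2k$ field elements, i.e.\ $O(k\log(M+N))$ bits, and the sketch is deterministic.

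Next I will show that the sketch determines $\bx$ uniquely among vectors with at most $k$ nonzero entries. Suppose $\bx,\bx'$ are both such vectors and have the same sketch. Their difference $\mathbf{y} := \bx-\bx'$ has at most $2k$ nonzero entries and satisfies $\sum_i y_i\alpha_i^{j} \equiv 0$ for all $j<2k$. Restricted to its support, this is a square Vandermonde system in the distinct nonzero $\alpha_i$, which is nonsingular. Hence $\mathbf{y}\equiv 0 \pmod p$. Every entry of $\mathbf{y}$ is an integer in $[-2M,2M]$ and $p>2M$, so $\mathbf{y}=0$ over the integers. By the same argument, any $x_i\in[-M,M]$ can be read off from its residue mod $p$ by taking the representative in $(-p/2,p/2)$.

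Finally, recovery in polynomial time. Let the support be $T$ with $|T|=t\le k$. Define the locator polynomial $\Lambda(z)=\prod_{i\in T}(1-\alpha_i z)$. The sequence $(s_j)$ satisfies a linear recurrence of order $t$ whose connection polynomial is $\Lambda$, and $2k\ge 2t$ terms suffice for Berlekamp--Massey to find the shortest such recurrence, in $O(k^2)$ field operations. To get the support, evaluate $\Lambda$ at $\alpha_i^{-1}$ for every $i\in[N]$, which costs $O(Nk)$ time; alternatively one can use polynomial root finding. Knowing $T$, solve the $t\times t$ Vandermonde system $s_j=\sum_{i\in T}x_i\alpha_i^{j}$ for $j<t$, and lift the values to integers as described above.

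The main obstacle is the correctness of the Berlekamp--Massey step, namely that the minimal recurrence it returns is exactly $\Lambda$ and not some other recurrence. I would argue this through the Hankel matrix $H=(s_{a+b})_{a,b<t}$. It factors as $V D V^{\top}$, where $V$ is the Vandermonde matrix on $T$ and $D=\mathrm{diag}(x_i)_{i\in T}$ has nonzero diagonal. So $H$ is nonsingular, the linear complexity of $(s_j)$ is exactly $t$, and the recurrence is unique. As a fallback that avoids this reasoning, one can instead solve the Hankel linear system for the coefficients of $\Lambda$ directly, trying each $t=k,k-1,\ldots$; this is still polynomial time.
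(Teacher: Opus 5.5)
Your proof is correct. It differs from the paper only in that the paper gives no proof of this fact: it imports it from the sparse-recovery literature. The cited Gilbert--Indyk survey concerns recovery with sparse, expander-based measurement matrices, and Das--Vishwanath treat the finite-alphabet setting. You instead give a self-contained algebraic proof. Your sketch is the Reed--Solomon syndrome of $\bx \bmod p$, and your decoder is classical BCH/Prony decoding.

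The steps you flag as delicate do go through:
\begin{itemize}
\item Since $p>2M$, the support of $\bx \bmod p$ is the true support $T$, so $D$ is invertible.
\item Suppose a recurrence of length $L<t$ fit $s_0,\dots,s_{2k-1}$. Its coefficient vector, reversed and padded with zeros to length $t$, would be a nonzero kernel vector of $H=VDV^{\top}$; this uses $L+t-1\le 2k-1$. Likewise, the difference of two length-$t$ recurrences fitting the $2k\ge 2t$ terms would give such a vector. Hence Berlekamp--Massey must return $\Lambda$.
\item Since the $\alpha_i$ are distinct and nonzero, $\Lambda$ vanishes at $\alpha_i^{-1}$ exactly when $i\in T$.
\end{itemize}
Your separate uniqueness paragraph is not needed once the decoder is shown correct, but it does no harm. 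You also rightly read $\bx$ as integer-valued; that is the only reading under which a $\polylog(M,N)$-bit-per-counter bound makes sense.

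As for what each route buys: yours is fully explicit, uses only $2k$ counters of $O(\log(N+M))$ bits, and has an elementary proof. The price is that it relies essentially on integrality (exactness mod $p$), costs $O(k)$ field operations per update, and needs $O(Nk)$ time for the exhaustive root search. Sparse-matrix schemes of the cited kind touch only a few counters per update and extend to approximate recovery of non-sparse real vectors, but their deterministic analysis is considerably heavier. For the paper's use, recovering at most $k$ non-edges from a $0/1$ vector of length $\binom{n}{2}$, your construction suffices and gives $O(k\log n)$ space.

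The one detail you leave implicit is computing $p$ deterministically. This can be done once before the stream, for example by trial division over $(q,2q]$ with $q=\max(N,2M+1)$, using $O(\log(N+M))$ space. It does not enter the post-stream recovery time that the statement bounds.
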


\begin{fact}[Undirected Graph Recovery \cite{RubinsteinSW18}]\label{fact:graph-req-cq}
    Given cut-query access to an undirected graph $G$ on $n$ nodes and $m$ edges, there is a deterministic $O(\min(m\log n, n^2))$-query algorithm that recovers $G$.
\end{fact}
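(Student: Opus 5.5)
The plan is to reduce everything to one primitive: the number of edges between two disjoint vertex sets. In an undirected graph, for disjoint $A,B\subseteq V$ write $e(A,B)$ for the number of edges with one endpoint in $A$ and one in $B$. Every edge cut by $A\cup B$ is cut by exactly one of $A$ or $B$, and every edge between $A$ and $B$ is cut by both $A$ and $B$ but not by $A\cup B$. Hence
\[
e(A,B)=\tfrac12\bigl(\mathrm{cut}(A)+\mathrm{cut}(B)-\mathrm{cut}(A\cup B)\bigr),
\]
so $e(A,B)$ costs three cut queries. Because this returns a count, it also recovers multiplicities of parallel edges, which matters because skeletons in this paper may contain up to two parallel edges.

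\textbf{The $O(n^2)$ bound.} Query every singleton $\{u\}$, which gives the degrees, and every pair $\{u,v\}$. The multiplicity of the edge $\{u,v\}$ is then $e(\{u\},\{v\})=\tfrac12(\deg u+\deg v-\mathrm{cut}(\{u,v\}))$. This uses $n+\binom n2=O(n^2)$ queries.

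\textbf{The $O(m\log n)$ bound.} Recover the neighbourhoods by adaptive group testing (binary search).
\begin{enumerate}
\item For each vertex $u$, start with the candidate set $S=V\setminus\{u\}$ and compute $e(\{u\},S)$.
\item If $e(\{u\},S)=0$, discard $S$.
\item If $|S|=1$, output the edge to the single vertex of $S$ with multiplicity $e(\{u\},S)$.
\item Otherwise split $S$ into two halves and recurse on each.
\end{enumerate}
The recursion tree has depth $O(\log n)$, and a node survives only if its subtree contains a neighbour of $u$. So vertex $u$ costs $O(1+\deg(u)\log n)$ queries. Summing over $u$ gives $O(n+m\log n)$ queries, and each edge is discovered twice, which is harmless.

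\textbf{Main obstacle: the additive $n$ term.} The summation above leaves a $+n$ term, which is $O(m\log n)$ whenever there are $\Omega(n)$ non-isolated vertices, in particular when $m\ge n/2$. I would handle this in one of two ways.
\begin{itemize}
\item State the bound as $O(n+\min(m\log n,n^2))$. This is all the paper needs, since in the applications $n$ queries are spent on indegrees anyway.
\item Remove the term by running the binary search on pairs of sets. Recursively split $V$ and explore the tree of pairs $(A,B)$ with $A,B$ sibling or descendant blocks, pruning any pair with $e(A,B)=0$. Then every query is charged to an ancestor of some edge's leaf pair, which gives $O(m\log n)$ without the $n$ term.
\end{itemize}
Finally, take the better of the two strategies to obtain $O(\min(m\log n,n^2))$. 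The strategy can be chosen adaptively, for example by running the search and switching to the pairwise scheme once $n^2$ queries have been spent. The whole procedure is deterministic.
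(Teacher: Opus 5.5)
The paper gives no proof of this fact; it imports it from \cite{RubinsteinSW18}, so I am checking your argument against the statement itself. Several parts are correct: your identity for $e(A,B)$, the $n+\binom{n}{2}$ scheme, the per-vertex binary search costing $O(n+m\log n)$, and the budgeted switch to the pairwise scheme.

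The gap is the step you flag yourself. For $m=o(n/\log n)$ the stated bound is $o(n)$, and neither of your options reaches it. Your first option proves a weaker statement. As you note, that weaker bound is all that \Cref{cor:kclose} and \Cref{cor:quadcutq} actually use.

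Your second option does not work as sketched. The pair tree has one root pair $(z_L,z_R)$ for each of the $n-1$ internal nodes $z$ of the partition tree. The rule ``prune if $e(A,B)=0$'' only removes \emph{descendants} of a queried pair, never another root. To skip a whole block $z$ you would have to certify that $z$ spans no edge, i.e.\ that $\sum_{v\in z}\deg(v)-\mathrm{cut}(z)=0$. That is not an $O(1)$-query test; it is exactly the degree information whose $n$-query cost you are trying to avoid. On the empty graph, or a one-edge graph, your scheme therefore makes $\Theta(n)$ queries. Your claim that every query is charged to an ancestor of some edge's leaf pair is false for the edgeless root pairs.

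The missing idea is to use only $O(\log n)$ roots, namely a separating family of bipartitions:
\begin{enumerate}
\item For $b\le\lceil\log_2 n\rceil$, let $S_b$ be the set of vertices whose index has bit $b$ equal to $1$. Every edge crosses some $(S_b,V\setminus S_b)$.
\item Process the bits in order. Query $\mathrm{cut}(S_b)$ and subtract the already-found edges crossing it; this costs nothing, since you know those edges.
\item If the residue is positive, binary-search inside $S_b$ using the residual counts $e(T,V\setminus S_b)$ for $T\subseteq S_b$. This locates the endpoints in $S_b$ of the new crossing edges.
\item For each such endpoint, binary-search $V\setminus S_b$ for its new neighbours.
\end{enumerate}
Each edge is found in exactly one round, namely that of the first bit where its endpoints differ. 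Each new edge pays for $O(\log n)$ search nodes, so the total is $O(\log n+m\log n)$, which is $O(m\log n)$ for $m\ge 1$. The additive $\log n$ cannot be avoided: deterministically certifying the empty graph needs $\lceil\log_2 n\rceil$ queries that separate every pair of vertices. Your budgeted switch then yields the stated $O(\min(m\log n,n^2))$.
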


\subsection{Problems}\label{sec:probs}

We formally define some problems considered in this paper. The input to each of them is a digraph $G = (V,E)$.

\begin{itemize}
    \item \textit{$s,t$-Reachability:} Given $s,t\in V$, is there a directed path from $s$ to $t$?
    \item \textit{Strong connectivity:} Does each node have a directed path to every other node?
    \item \textit{SCC decomposition:} Partition the vertex set $V$ into $V_1, \ldots, V_\ell$ such that each $V_i$ induces a strongly connected component (SCC) of $G$, i.e., a maximal subgraph of $G$ that is strongly connected.
    \item \textit{Acyclicity testing:} Is there a directed cycle in $G$?
    \item \textit{Topological sorting:} Given that $G$ is a DAG, output an ordering of $V$ such that for each edge $(u,v)\in E$, $u$ appears (somewhere) before $v$ in the ordering.
    \item \textit{Global minimum cut:} Find the size of the smallest directed cut, i.e., $\min_{U\subseteq V} \inCut(U)$.
    \item \textit{Minimum $s,t$-cut:} Find the size of the smallest directed cut whose removal makes $s$ unreachable from $t$, i.e., $\min \{\inCut(U): U\subseteq V, s\in U, t\not\in U \}$.
    \item For an ordering $\sigma$ of the digraph nodes, the {\em width} of $\sigma$ is defined as the maximum cut size over the prefixes of $\sigma$. The \textit{cutwidth} of the digraph is the minimum width over all vertex orderings, i.e., 
    $$\text{cutwidth}(G):=\min_{\sigma\in S_n} \left\{\max_{\text{prefix $P$ of }\sigma} \inCut(P) \right\}.$$
    \item \textit{Optimal Linear Arrangement} (\ola) is a vertex ordering that minimises the sum (rather than maximum) of the prefix-cut sizes, i.e.,
$$\mathsf{OLA}(G):=\text{argmin}_{\sigma\in S_n} \left\{\sum_{\text{prefix $P$ of }\sigma}\inCut(P)  \right\}.$$
    Equivalently, it is an ordering that minimises the sum of the {\em stretches} of the induced back edges \cite{BarberoPP17}.

    For each of cutwidth and \ola, we refer to the problem version that asks for the value and the one that asks for the ordering as its ``value version'' and ``ordering version'' respectively.
\end{itemize}

    \section{Graph Characterisation and Computation Based on Skeleton and Indegrees}

 Recall that $\cG[\skel,\degs]$ is the equivalence class of all graphs with skeleton $\skel$ and indegree sequence $\degs$. Two graphs $G_1$ and $G_2$ are related, i.e., $G_1 \sim G_2$, iff both have the same skeleton and indegree sequence, i.e., both belong to the same equivalence class. We prove the following general lemma that implies our main theorems.

\begin{mdframed}[style=thmstyle]
\begin{lemma}\label{lem:main}
\begin{enumerate}
\item[(1)] For any two digraphs $G_1$ and $G_2$, we have $G_1 \sim G_2$ iff there is a set of cycles in $G_1$, flipping the directions of which yields $G_2$.

    \item[(2)] Given an undirected graph $\skel$ on $n$ nodes and $m$ edges and a sequence $\degs$ of $n$ non-negative integers, we have an $O(m^{1+o(1)})$-time
    deterministic algorithm that outputs a directed graph from $\cG[\skel,\degs]$ if $\cG[\skel,\degs]$ is non-empty, and otherwise outputs $\bot$.

\end{enumerate}
\end{lemma}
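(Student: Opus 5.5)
For part (1), the backward direction is immediate. Flipping a directed cycle leaves the underlying undirected edges, and hence the skeleton, unchanged. Each vertex on the cycle loses exactly one incoming edge and gains exactly one, so every indegree is preserved.

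For the forward direction, suppose $G_1\sim G_2$. Identify the edges of both graphs with edges of the common skeleton $\skel$. For a pair of parallel edges coming from a $2$-cycle, match the copies so that identically oriented copies are paired; this is always possible because both graphs have the same multiset of edges on $\{u,v\}$ up to orientation. Let $D$ be the set of skeleton edges oriented differently in $G_1$ and $G_2$, and let $H:=D[G_1]$ be the subgraph of $G_1$ on these edges. For every vertex $v$, the edges outside $D$ contribute equally to $\din(v,G_1)$ and $\din(v,G_2)$. Hence $\din(v,H)=\din(v,D[G_2])=\dout(v,H)$, since $D[G_2]$ is $H$ reversed. So $H$ is Eulerian (balanced at every vertex), and the standard argument decomposes it into edge-disjoint directed cycles: repeatedly walk from any vertex with a remaining edge until a vertex repeats, extract the cycle, and note that balance is preserved. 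These cycles lie in $G_1$, and flipping all of them turns $H$ into $D[G_2]$ while leaving every other edge fixed, which yields $G_2$.

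For part (2), reduce to a flow feasibility problem. Orient $\skel$ arbitrarily to obtain $G_0$ in $O(m)$ time, and first check that $\sum_v \degs_v = m$, outputting $\bot$ otherwise. Define the excess $e(v):=\din(v,G_0)-\degs_v$. Build a flow network whose arcs are the edges of $G_0$ with capacity $1$, add a super-source $s$ with arcs $(s,v)$ of capacity $e(v)$ for every $v$ with $e(v)>0$, and a super-sink $t$ with arcs $(v,t)$ of capacity $-e(v)$ for every $v$ with $e(v)<0$.

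Since reversing a path decreases the indegree of its first vertex by one and increases that of its last vertex by one, an integral flow from $s$ to $t$ in this network corresponds exactly to a set of edges of $G_0$ to reverse. I will route the unit flow through each edge $(u,v)$ of $G_0$ against the orientation, meaning from $v$ to $u$: reversing that edge lowers $\din(v)$ and raises $\din(u)$, so flow should leave vertices with too large an indegree. Concretely, I will give the network arcs $(v,u)$ for each $(u,v)\in G_0$. A saturating flow of value $\sum_{e(v)>0}e(v)$ then changes every vertex's indegree by exactly $-e(v)$. Conversely, by part (1)'s argument applied to the symmetric difference, any target graph in $\cG[\skel,\degs]$ gives such a saturating integral flow. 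So $\cG[\skel,\degs]$ is non-empty iff the max flow saturates the source arcs.

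By integrality, an integral max flow exists, and the almost-linear-time deterministic max-flow algorithm (Chen et al.\ and its deterministic follow-up by van den Brand et al.) computes it in $m^{1+o(1)}$ time. Reversing the edges that carry flow then yields the desired output graph.

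The main obstacle is bookkeeping rather than depth. I need to get the arc direction and the signs of the excesses consistent, and to handle parallel edges from $2$-cycles correctly: each copy must be treated as a separate capacity-$1$ arc. I also need to check that the flow decomposition does not create a problem when the chosen flow contains a cycle; it does not, because reversing a directed cycle of $G_0$ preserves degrees anyway. Finally, the result relies on citing a deterministic almost-linear max-flow algorithm.
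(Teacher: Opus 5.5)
Your part (1) is the paper's argument: the edges on which $G_1$ and $G_2$ disagree form a subgraph of $G_1$ that is balanced at every vertex, hence a union of edge-disjoint cycles whose reversal gives $G_2$.

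Part (2) takes a different route and is sound for simple skeletons, but it has a concrete flaw exactly where you flagged it: you treat the two copies of a doubled skeleton edge as independent capacity-$1$ arcs. Digraphs here have no parallel arcs, so a doubled edge $\{u,v\}$ of $\skel$ can only come from a $2$-cycle. Every member of $\cG[\skel,\degs]$ must therefore contain both $(u,v)$ and $(v,u)$, and nothing in your network forces exactly one of two identically oriented copies to be flipped.

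Here is a concrete failure. Take vertices $u,v,x$ with a doubled edge $\{u,v\}$ and single edges $\{v,x\},\{x,u\}$. Let the target indegrees be $1,2,1$ for $u,v,x$, realised by $u\leftrightarrow v$, $x\to v$, $u\to x$. Suppose $G_0$ orients both copies as $u\to v$, together with $x\to v$ and $u\to x$. Then $e(u)=-1$, $e(v)=1$, $e(x)=0$, and $s\to v\to x\to u\to t$ is a maximum, saturating flow. Flipping the two edges it uses yields $u\to v$ twice, $v\to x$, $x\to u$. The indegrees are correct, but the result has two parallel arcs, so it is not a graph in $\cG[\skel,\degs]$, even though the instance is feasible.

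Likewise, a maximum flow that contains a cycle through one copy (in addition to a path through the other) flips both copies. Your remark that flipping a cycle of $G_0$ ``preserves degrees anyway'' is true but does not rule this out. Since the max-flow routine may return any maximum flow, your algorithm as stated can output an invalid graph.

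The repair is the paper's first step. Orient every doubled edge as a $2$-cycle already in $G_0$, and leave both copies out of the flow network. Any target graph contains the same $2$-cycles, so your symmetric-difference argument still yields a saturating flow, and the output keeps the $2$-cycles.

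With this change your route is a valid alternative to the paper's. The paper solves a circulation-with-demands problem on the single edges: capacity $1$ in either direction, with demand $\dout(u)-\din(u)$ at each vertex. It orients edges carrying flow along the flow, then orients the zero-flow edges along Euler tours, which relies on those edges having even degree at every vertex. Your reorientation of a fixed $G_0$ avoids the Euler-tour step and that parity bookkeeping, since unflipped edges simply keep their orientation from $G_0$. Both approaches rest on the same deterministic almost-linear max-flow algorithm.
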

\end{mdframed}

\mypar{Characterisation} We first prove \Cref{lem:main} (1), which characterises graphs based on skeleton and indegree sequence.

\begin{proof}[Proof of \Cref{lem:main} (1)]
    The ``if'' direction is straightforward. Given any $v\in V$, each cycle in $G_1$ containing $v$ has one edge outgoing from $v$ and one edge incoming to $v$. Therefore, reversing a cycle preserves the indegree of vertex $v$. Also, reversing the direction of a cycle does not change the skeleton of the graph. Hence, if $G_2$ is obtained by reversing cycles in $G_1$, the graphs $G_1$ and $G_2$ must have the same skeleton and indegree sequence, i.e., $G_1\sim G_2$.

    For the ``only if'' direction, $G_1$ and $G_2$ have the same skeleton, say $\skel$. Let $F$ be the set of (undirected) edges in $\skel$ that have opposite directions in $G_1$ and $G_2$. Let $\Fbar:= \skel\setminus F$. Fix any vertex $v\in V$.  We have
     \begin{align}
     \label{eq:indegGprime}
        \din(v, G_2) 
        = \din(v, F[G_2]) + \din(v, \Fbar [G_2])
        = \dout(v, F[G_1]) + \din(v,  \Fbar[G_1])
     \end{align}
     The second equality follows since precisely the edges in $F$ have opposite directions in $G_1$ and $G_2$, and the edges in $\Fbar$ have the same direction in $G_1$ and $G_2$. Again,
    \begin{align}
        \label{eq:indegG}\din(v, G_1) 
        = \din(v, F[G_1]) + \din(v,  \Fbar[G_1])
     \end{align}
    But since $G_1$ and $G_2$ have the same indegree sequence, $\din(v,G_1)=\din(v,G_2)$. Hence, from \cref{eq:indegGprime} and \cref{eq:indegG}, we get $\dout(v, F[G_1]) = \din(v, F[G_1])$. Since this is true for all $v\in V$, $F[G_1]$ is an Eulerian circuit, a collection of edge-disjoint cycles. We can obtain $G_2$ from $G_1$ by precisely reversing $F[G_1]$. 
\end{proof}

Therefore, if a graph property is invariant under cycle reversal, then the skeleton and the indegree sequence suffice to determine it.

\mypar{Completeness of characterisation} We note that if a property is {\em not} invariant under cycle reversal, then the skeleton and the indegrees alone cannot decide the property. Assume to the contrary that there is a property $\cP$ that is not invariant under cycle reversal, but there is an algorithm that decides it from only the skeleton and the indegree information. By definition of $\cP$, there must exist graphs $G$ and $G'$ such that $G'$ is obtained by reversing cycles of $G$, but $\cP$ has different answers on $G$ and $G'$. By \Cref{lem:main} (1), $G\sim G'$. Therefore, they have the same skeleton, say $\skel$, and the same indegree sequence, say $\degs$. Our algorithm outputs an answer that is a function of only $\skel$ and $\degs$. Thus, it outputs the same answer for $\cP$ on both $G$ and $G'$, and hence, must be incorrect on one of them. This shows that the class of cycle-reversal-invariant properties {\em completely} characterises the set of properties determined by the skeleton and the indegrees.

\begin{corollary}\label{cor:maincombnec}
    The set of well-defined properties for the equivalence under skeleton and indegree sequence is \textit{exactly} the class of properties that are invariant under reversal of cycles.
\end{corollary}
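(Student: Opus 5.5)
The corollary is a repackaging of \Cref{lem:main} (1) together with the completeness argument just given. I would make precise what a ``well-defined property for the equivalence'' means: a property $\cP$ (a function, or relation, from digraphs to answers) is well-defined on the classes $\cG[\skel,\degs]$ iff $G_1\sim G_2$ implies $\cP(G_1)=\cP(G_2)$. Equivalently, $\cP$ factors through the map $G\mapsto(\skel(G),\degs(G))$. For relations, such as topological orderings or SCC-based outputs, I would read ``same answer'' as equality of the set of valid answers. With this convention, a single uniform argument covers decision, search and optimisation problems. The proof then has two inclusions, each a one-line appeal to \Cref{lem:main} (1).

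First, suppose $\cP$ is invariant under cycle reversal, and let $G_1\sim G_2$. By the ``only if'' direction of \Cref{lem:main} (1), $G_2$ is obtained from $G_1$ by reversing the set of cycles that make up $F[G_1]$. Here $F[G_1]$ is Eulerian, so it decomposes into edge-disjoint cycles. I would reverse these cycles one at a time and apply invariance at each step. One detail needs checking: after earlier cycles are reversed, each remaining cycle is still a directed cycle in the current graph. This holds because the cycles are edge-disjoint. Hence $\cP(G_1)=\cP(G_2)$, so $\cP$ is well-defined on the equivalence classes.

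Conversely, suppose $\cP$ is well-defined on the classes, and let $G'$ be obtained from $G$ by reversing any set of cycles. By the ``if'' direction of \Cref{lem:main} (1), $G\sim G'$, and therefore $\cP(G)=\cP(G')$. So $\cP$ is invariant under cycle reversal. This is exactly the contrapositive argument in the preceding paragraph, restated without reference to an algorithm.

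I expect no real obstacle here. The only point needing care is the definitional one: invariance ``under reversal of cycles'' should mean reversing any collection of cycles, and this is equivalent to invariance under single-cycle reversals by the edge-disjoint sequential argument above. Once that is fixed, both inclusions are immediate from \Cref{lem:main} (1).
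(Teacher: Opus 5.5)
Your proof is correct and follows the paper's own argument. The forward inclusion is the ``only if'' direction of Lemma~\ref{lem:main}(1), which the paper records right after that proof, and the converse is the paper's completeness argument via the ``if'' direction, restated without reference to an algorithm; your sequential reversal of the edge-disjoint cycles of $F[G_1]$ and your precise definition of ``well-defined'' simply make explicit steps the paper leaves implicit.
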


\begin{remark}
    The characterisation can be equivalently stated with the flipped cycles being ``edge-disjoint.'' This is because the effect of reversing two overlapping cycles $C_1$ and $C_2$ with $C_1\cap C_2 = F$ can be obtained by reversing the cycle $(C_1\cup C_2) \setminus F$.  
\end{remark}




\mypar{Efficient computation} \Cref{lem:main} (1) only implies an exponential-time algorithm for any cycle-reversal-invariant problem when we are given the skeleton and indegree sequence of the input digraph: we brute force over all orientations of the given skeleton and find one that satisfies the given indegree sequence. The invariance guarantees that solving the problem on this digraph would give us the solution for the original digraph. Now we focus on generating such a feasible digraph efficiently in {\em polynomial time}.


\begin{proof}[Proof of \Cref{lem:main} (2)]
We describe a polynomial-time algorithm to assign directions to edges in $\skel$ so as to obtain a digraph $G'\in \cG[\skel, \degs]$. 
Let $w(u, v)$ be the number of edges between $u$ and $v$ in $\skel$. If $w(u, v) = 2$, $G'$ must have both $\overrightarrow{uv}$ and $\overrightarrow{vu}$. Therefore, we can assign directions to all such edges.
   
    Ignore all edges whose directions have been marked. Therefore, for the remaining edges $\{u,v\}$ in $\skel$, we have $w(u, v) =0 \text{ or }1$. 
    Now, we find a flow $f$ which satisfies the following equations:
    \begin{align}
        \forall \{u,v\}\in E(\skel) &:~~ f(u,v)=-f(v,u)\label{const:flow1}\\
        \forall \{u,v\}\in E(\skel) &:~~|f(u, v)| \leq w(u, v) &&w(u, v) = 0 \text{ or }1\label{const:flow2}\\
        \forall u\in V &: \sum_{v\in N_{\skel}(u)} f(u, v) = d_{out}(u) - d_{in}(u)\label{const:flow3}
    \end{align}
    where $E(\skel)$ is the set of edges in $\skel$, and $N_\skel(u)$ is the neighbourhood of $u$ in the graph $\skel$.
    
    We check for a feasible flow satisfying these constraints. This is the well-studied \textit{circulation problem with vertex demands} \cite{GoldbergT90, Tardos85, Tarjan91} and  can be solved using any polynomial-time $(s,t)$-maxflow algorithm in the following way: let us define the demand of a vertex $u$ to be $\dem(u) := d_{out}(u) - d_{in}(u)$. We create a source node $s$ and a sink node $t$. For each node $u$ with $\dem(u) > 0$, we put an edge from $s$ to $u$ with capacity $\dem(u)$. Similarly, for every node $u$ with $\dem(u)< 0$, we put an edge from $u$ to $t$ with capacity $-\dem(u)$. Note that this augmentation modifies the flow conservation constraints to ``$\forall u\in V: \sum_{v\in N_{\skel}(u)} f(u, v) = 0$'', which means that we can now run an $(s,t)$-max-flow algorithm on this augmented graph, and the corresponding flow on the original graph will maintain all original constraints.  Using the recent breakthrough result of \cite{Brand0PKLGSS23}, we can solve it deterministically in $O\left(m^{1+o(1)}\right)$ time. 
    
    If there is no feasible flow, then we return $\bot$, announcing that $\cG[\skel, \degs] = \emptyset$. To argue correctness, observe that if $\cG[\skel, \degs] \neq \emptyset$ and $G=(V,E)\in \cG[\skel, \degs]$, then the following flow is feasible.  
    \[
f(u,v) =
\begin{cases}
1 & \text{if } (u,v) \in E, \\
-1 & \text{if } (v,u) \in E,
\end{cases}
\]
Thus, we shall always find a feasible flow in this case. The contrapositive says that if we do not find a feasible flow, then $\cG[\skel, \degs]$ must be empty. 
    
    Suppose that we do find a feasible flow. Then we describe a strategy to orient the remaining edges to construct a graph $G'$ that we output. As $w(u, v)$ and $\dem(u)$ are integers, we know that the flow $f$ is also integral (therefore, $f(u, v) > 0 \implies f(u, v) = 1$). Now, to all edges $\{u, v\}\in \skel$ such that $f(u, v) = 1$, we assign the direction $\overrightarrow{uv}$. Let $\skel'\subseteq \skel$ be the edges that have not been assigned a direction yet, i.e., the edges $\{u,v\}\in \skel$ that have flow $f(u,v)=0$. Because the demand constraint of each node $v\in V$ has been satisfied, we know that $\din(v, \skel'[G']) = \dout(v, \skel'[G'])$ for all $v\in V$. This is equivalent to the fact that the subgraph $\skel'[G']$ forms an Eulerian circuit. We can now find an Euler tour on $\skel'$ or detect if none exists in $O(m)$ time  \cite{fleischner1990eulerian}; if $\skel'$ is not Eulerian, then we know that $\cG[\skel, \degs] = \emptyset$. If we do find an Euler tour, then orienting the edges in $\skel'$ along the direction of the tour, we ensure that for every vertex $v$, the in- and out-degree of $v$ restricted to edges in $\skel'$ are equal. As a result, we manage to assign a direction to each edge in $\skel$ to obtain a digraph $G'$ that has the indegree sequence $\degs$. The total runtime is $O\left(m^{1+o(1)}\right)+O(m) = O\left(m^{1+o(1)}\right).$
    \end{proof}

The proof of \Cref{lem:main} is now complete. From \Cref{lem:main} and \Cref{cor:maincombnec}, we have our main theorem.

\begin{mdframed}[style=thmstyle]
    \maincomb*
\end{mdframed}

\mypar{Applications to Cut Query} Since for any subset of nodes $U$ of a digraph, the cut size $\inCut(U)$ is invariant under cycle reversal, we have the following corollary of \Cref{thm:gendig-comb} that is significant from the perspective of cut-query algorithms.

\begin{corollary}\label{cor:cut-invariance}
   Given an undirected graph $\skel$ and a sequence $\degs$ of non-negative integers, the cut profiles of all graphs in $\cG[\skel, \degs]$ are identical. 
\end{corollary}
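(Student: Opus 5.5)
The plan is to combine \Cref{lem:main} (1) with the observation that, for every fixed $U\subseteq V$, the cut size $\inCut(U)$ does not change when a directed cycle is reversed. Fix $G_1,G_2\in\cG[\skel,\degs]$. Then $G_1\sim G_2$, and \Cref{lem:main} (1) says that $G_2$ is obtained from $G_1$ by reversing a set of cycles. By the remark after \Cref{cor:maincombnec}, we may take these cycles to be edge-disjoint; concretely, the proof of \Cref{lem:main} (1) shows that $G_2$ is obtained by reversing the Eulerian subgraph $F[G_1]$ in a single step. So it suffices to prove: if $H$ is any digraph and $C$ is an Eulerian subgraph of $H$ (every vertex has equal in- and out-degree in $C$), and $H'$ is obtained by reversing every edge of $C$, then $\inCut_{H}(U)=\inCut_{H'}(U)$ for every $U\subseteq V$. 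Iterating over cycles, or applying this once to $F[G_1]$, then gives identical cut profiles.

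To prove the key claim, I would split the edges of $H$ into those in $C$ and those not in $C$. Edges outside $C$ are identical in $H$ and $H'$, so they contribute equally to the two cuts. For edges in $C$, note that an edge $(v,u)$ enters $U$ in $H$ exactly when its reversal $(u,v)$ leaves $U$ in $H'$. Hence the contribution of $C$ to $\inCut_{H'}(U)$ equals $\outCut_{C}(U)$, while its contribution to $\inCut_H(U)$ equals $\inCut_C(U)$. It therefore remains to show $\inCut_C(U)=\outCut_C(U)$ for an Eulerian $C$.

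This last equality is the only real step. I would sum $\din(v,C)-\dout(v,C)=0$ over all $v\in U$. Each edge with both endpoints in $U$ contributes $+1$ and $-1$, which cancel. Each edge entering $U$ contributes $+1$, and each edge leaving $U$ contributes $-1$. So $0=\inCut_C(U)-\outCut_C(U)$. Parallel edges, such as the $2$-cycles recorded as double edges in $\skel$, need no separate treatment, since the counting is done per edge of the multiset. I do not foresee a genuine obstacle. The only point needing care is correctly handling edges inside $U$ and the bookkeeping of multi-edges, and the summation argument covers both.
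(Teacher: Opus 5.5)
Your proof is correct, but it is not the argument the paper gives. It is essentially the ``follows immediately from \Cref{thm:gendig-comb}'' route that the paper mentions and then sets aside, except that you actually prove the invariance of cuts under reversal rather than assert it. You use \Cref{lem:main}~(1) to write $G_2$ as $G_1$ with the Eulerian subgraph $C=F[G_1]$ reversed. You then show that this reversal trades $\inCut_C(U)$ for $\outCut_C(U)$, and these agree because summing $\din(v,C)-\dout(v,C)=0$ over $v\in U$ leaves exactly $\inCut_C(U)-\outCut_C(U)$.

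The paper instead applies the same degree-summation identity to the whole graph $G$, obtaining $\outCut_G(U)-\inCut_G(U)=\sum_{v\in U}(\dout(v)-\din(v))$. It pairs this with $\outCut_G(U)+\inCut_G(U)=\cC_{\skel}(U)$. Since $\dout(v)=\deg_{\skel}(v)-\din(v)$, both right-hand sides depend only on $(\skel,\degs)$, so $\inCut_G(U)=\tfrac12\bigl(\cC_{\skel}(U)-\sum_{v\in U}(\dout(v)-\din(v))\bigr)$ for every $G$ in the class.

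The paper's route is self-contained, with no appeal to the cycle characterisation. It also yields a closed form from which every cut can be read off the skeleton and indegrees without constructing any member of $\cG[\skel,\degs]$. Your route instead establishes a more general fact, which the paper only asserts in passing and relies on in the proof of \Cref{cor:algs}: in an arbitrary digraph, every directed cut is preserved when any cycle, or more generally any Eulerian subgraph, is reversed.
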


Although it follows immediately from \Cref{thm:gendig-comb}, we give an independent direct and simple proof of the statement.

\begin{proof}
If $\cG[\skel, \degs] =\emptyset$, then this is vacuously true. Hence, we assume that $\cG[\skel, \degs] \neq \emptyset$. Let $G$ be any graph in $ \cG[\skel, \degs]$. For any $U\subseteq V$, let $\cC_{\skel}(U)$ denote the undirected cut size of $U$ in $\skel$. Observe that 
    \begin{align}
    &\outCut_G(U) + \inCut_G(U) = \cC_{\skel}(U),\label{eq:lin1}\\
    &\outCut_G(U) - \inCut_G(U) = \sum_{v\in U} (\dout(v) - \din(v)).\label{eq:lin2} 
    \end{align}
\Cref{eq:lin2} follows since each edge $(u,w)$ inside $U$ (i.e., both $u,w\in U$) contributes to both $\dout(u)$ and $\din(w)$ and cancels. Now, for each $U\subseteq V$, the values $\inCut_G(U)$ and $\outCut_G(U)$ are determined by the linearly independent equations (\ref{eq:lin1}) and (\ref{eq:lin2}). The RHS of \cref{eq:lin1,eq:lin2} are completely determined by $\skel$ and $\degs$. Hence, for fixed $\skel$ and $\degs$, $\inCut_G(U)$ and $\outCut_G(U)$ are identical for all $G\in \cG[\skel, \degs]$. 
\end{proof}

We restate and prove \Cref{cor:quadcutq} that captures an important implication as discussed in \Cref{sec:intro}.

\begin{mdframed}[style=thmstyle]
\quadquery*
\end{mdframed}

\begin{proof}
    For an input digraph $G$, we find the indegree profile $\degs$ using $n$ queries. We recover its skeleton $\skel$ using $O(n^2)$ queries (\Cref{fact:graph-req-cq}). 
    Then using \Cref{lem:main} (2), we generate $G' \sim G$ in polynomial time and then find the global- or $s,t$-min-cut of $G'$ using any polynomial-time classical algorithm for the problem. By \Cref{cor:cut-invariance}, the answer for $G$ must be the same.
\end{proof}

\section{Applications to Algorithms for Tournaments and Almost Tournaments}

Recall that a {\em tournament} graph is one where each pair of vertices shares exactly one directed edge. In other words, a tournament is an oriented complete graph. The indegree sequence of a tournament is also called its {\em score sequence}. We obtain the following characterisation of tournaments based on their score sequences.

\begin{corollary}[score sequence characterisation]\label{cor:tour-indeg}
    Two tournaments $T$ and $T'$ have the same indegree sequence iff $T$ can be obtained from $T'$ by only reversing a set of cycles of $T'$.  
\end{corollary}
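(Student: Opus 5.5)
The plan is to obtain \Cref{cor:tour-indeg} as a direct specialisation of \Cref{lem:main} (1). The key observation is that the skeleton of any tournament on vertex set $V$ is the complete graph $K_V$: every pair $\{u,v\}$ carries exactly one directed edge, so each pair contributes exactly one undirected edge. There are no parallel edges, because a tournament has no directed $2$-cycles. Hence any two tournaments $T$ and $T'$ on the same vertex set automatically share the skeleton $\skel = K_V$.

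For the forward direction, suppose $T$ and $T'$ have the same indegree sequence $\degs$. Together with the common skeleton $K_V$, this means both lie in $\cG[K_V,\degs]$, i.e., $T' \sim T$. Applying the ``only if'' part of \Cref{lem:main} (1) with $G_1 = T'$ and $G_2 = T$ gives a set of cycles in $T'$ whose reversal yields $T$.

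For the converse, suppose $T$ is obtained from $T'$ by reversing a set of cycles of $T'$. The ``if'' part of \Cref{lem:main} (1) gives $T \sim T'$, so in particular they have the same indegree sequence. If one wants the statement self-contained within the class of tournaments, one can also note that reversing cycles preserves the skeleton, so the result is again a tournament. This is consistent with $T$ being one by hypothesis.

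I do not expect any real obstacle. The only point needing care is the justification that the skeleton is determined by the vertex set alone, namely that it is $K_V$ with multiplicity exactly one. This is what lets the score sequence alone, rather than the pair (skeleton, indegrees), suffice.
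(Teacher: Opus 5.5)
Your proposal is correct and matches the paper's proof, which likewise notes that any two tournaments on the same vertex set share the complete graph as skeleton and then applies \Cref{lem:main}~(1) directly. Your extra remarks spell out what the paper leaves implicit: there are no parallel edges because tournaments have no $2$-cycles, and the roles are assigned as $G_1 = T'$, $G_2 = T$.
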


\begin{proof}
    Since $T$ and $T'$ have the same skeleton (the complete graph), the statement immediately follows from \Cref{lem:main} (1).
\end{proof}

Again, we have the following implication of \Cref{thm:gendig-comb}.

\begin{mdframed}[style=thmstyle]
\indegt*
\end{mdframed}

\Cref{thm:tour-comb} now implies our algorithmic results for tournaments in various computational models. 

\begin{mdframed}[style=thmstyle]
\cycinvgen*
\end{mdframed}

\begin{proof}
   Given an input tournament, we can find its indegree sequence in the cut-query model using $n$ singleton cut queries; in the streaming model, indegree count takes $O(n\log n)$ space and a single pass; in the communication model, Alice can send Bob $O(n \log n)$ bits that encode the indegree sequence induced by her edges $E_A$, from which Bob can compute the indegree sequence induced by all edges $E_A\sqcup E_B$. Then, by \Cref{thm:tour-comb}, we can determine the problem $\cP$ with an additive $n^{2+o(1)}$-factor overhead on the runtime of a classical algorithm that solves it. 
\end{proof}

\tourimps*

\begin{proof}
It is easy to see that reachability, strong connectivity, or in general, any strongly connected component of a graph does not change when cycles in the graph are flipped. Since a DAG has no cycle, the DAG-based properties indeed remain intact. For the vertex ordering problems and cut problems, we note that they are all certain functions of the cuts of a graph. {\em All} cuts in the graph are preserved under cycle reversal, and hence all these properties are, in turn, preserved. Therefore, all these problems are invariant under cycle reversal, and since they all have polynomial-time classical algorithms on tournaments, the upper-bound results follow from \Cref{thm:cycinv}.

The optimality of the bounds up to a factor of $\log n$ in each model---including streaming and cut query---follows from $\Omega(n)$ communication complexity of the problems. By standard streaming-to-communication simulations, an $S$-bit communication lower bound for a problem implies an $S$-space single-pass lower bound for the same problem in the streaming model. Again, an $\Omega(n/\log n)$ cut-query lower bound for each problem follows from its $\Omega(n)$ communication lower bound: a $C$-cut-query algorithm for the problem implies an $O(C\log n)$-communication protocol for it, since the players can communicate to each other the answer for each query using $O(\log n)$ bits.

We now justify the $\Omega(n)$ communication lower bound for each problem. \cite{GhoshK24} showed such lower bounds for the connectivity-based and the DAG-based problems. The lower bounds on the value versions of the vertex-ordering problems follow via trivial reductions from the acyclicity problem: the cutwidth or the \ola-value of a tournament is zero if and only if it is acyclic. Along similar lines, the lower bounds for their ordering versions follow from the topological ordering problem: the cutwidth-ordering and \ola~of an acyclic tournament are also valid topological orderings of the tournament. 

Again, the lower bounds for global- and $s,t$-minimum-cut follow via easy reductions from the strong connectivity and $s,t$-reachability problems respectively: observe that in any digraph, the minimum cut size is non-zero if and only if the graph is strongly connected. Again, the minimum $s,t$-cut size is non-zero if and only if $s$ is reachable from $t$. 
\end{proof}

Define a digraph to be {\em $k$-close to tournament} if it is obtained by deleting at most $k$ edges from a tournament. \cite{GhoshK24} showed that these graphs are some generalisations of tournaments that still admit sublinear-space streaming algorithms for connectivity-based problems. Our results imply a much more general statement.

\kcloseimps*

\newcommand{\Sbar}{\overline{\skel}}

\begin{proof}
    For a digraph that is $k$-close to tournament, its $k$ non-edges can be recovered deterministically in the single-pass streaming model using $O(k\cdot \text{polylog }n)$ space via sparse recovery (\Cref{fact:sp-rec}), and the degree information takes an additional $O(n\log n)$ space. This streaming algorithm can be easily simulated in the communication model along standard lines: Alice runs the algorithm on her edges and sends Bob the algorithm's memory state at the end of her stream. Given this state, Bob continues the stream with his edges and finds the solution for the entire graph at the end of the stream. The total communication is at most the space usage of the algorithm, which is $\tO(n+k)$. 
    
   Now consider the cut-query model. Let $\skel$ be the skeleton of the input graph $G$. We recover the complement graph $\Sbar$ of $\skel$. Note that $\Sbar$ is an undirected graph on at most $k$ edges. For each $U\subseteq V$, we need $2$ queries to the directed cut oracle on $G$ to recover the undirected cut $\cC_{\Sbar}(U)$ since $\cC_{\Sbar}(U) = |\Ubar| - \inCut_{G}(U) - \inCut_{G}(\Ubar)$, where $\Ubar:=V\setminus U$. Thus, by \Cref{fact:graph-req-cq}, we need $O(k\log n)$ directed cut queries on $G$ to recover $\Sbar$. Once we do that, we can compute $\skel$. The indegree sequence is obtained by $n$ additional queries. The results then follow from \Cref{thm:gendig-comb}. 
\end{proof}

This gives us an improvement over a prior result of \cite{GhoshK24}. They gave $\tO(2^{\min(k, n)})$-time algorithms for reachability and strong connectivity, and $\tO(2^n)$-time algorithms for SCC decomposition and constructing the SCC-graph. This is because they go over every possible ``completion'' of the almost tournament and on each completion, call a blackbox for tournaments. \Cref{cor:kclose} improves these runtimes to polynomial time -- in fact, almost-linear $O(m^{1+o(1)})$ time since each of these problems has a classical $O(m+n)$-time algorithm for any digraph.

    \section{Connections Between the Streaming and the Cut-Query Models}\label{app:conn}

Finally, we observe the following interesting connection between the semi-streaming and cut-query models with respect to tournament problems.

\begin{restatable}{lemma}{qryimplystr}\label{lem:qry-solve-semi-str}
    If a tournament problem $\cP$ is decidable/solvable by cut queries (regardless of the number of queries), then it has a single-pass $O(n\log n)$-space algorithm. More generally, if a problem $\cP$ on a digraph that is $k$-close to tournament, is solvable by cut queries, then it has a single-pass $\tO(n+k)$-space streaming algorithm.
\end{restatable}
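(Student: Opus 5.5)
The plan is to show that a single-pass summary of size $O(n\log n)$ (respectively $\tO(n+k)$) lets us simulate \emph{every} cut query offline after the stream ends. Then any cut-query algorithm for $\cP$, with any number of queries, can be run in post-processing on the stored data. The key tool is \Cref{cor:cut-invariance}: the entire cut profile of a digraph is determined by its skeleton $\skel$ and indegree sequence $\degs$.

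For tournaments, the skeleton is the complete graph $K_n$, which is known in advance. The streaming algorithm simply maintains the indegree counts, using $O(n\log n)$ space. After the pass, to answer a query $U\subseteq V$ of the cut-query algorithm, we use the identities \eqref{eq:lin1} and \eqref{eq:lin2} from the proof of \Cref{cor:cut-invariance}. With $\dout(v)=(n-1)-\din(v)$ and $\cC_{K_n}(U)=|U|(n-|U|)$, these give
\[
\inCut(U)=\tfrac12\Bigl(|U|(n-|U|)-\sum_{v\in U}\bigl(\dout(v)-\din(v)\bigr)\Bigr).
\]
So each query is answered exactly from the stored indegrees. Running the (possibly adaptive, possibly exponentially long) cut-query algorithm against this simulated oracle returns exactly what it would on the true tournament, so it solves $\cP$. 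Space is only counted during the stream, so the post-processing cost is irrelevant to the space bound, though it is worth remarking that time may blow up.

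For the $k$-close case, the plan follows the proof of \Cref{cor:kclose}. During the single pass, we maintain the indegrees in $O(n\log n)$ space and use sparse recovery (\Cref{fact:sp-rec}) on the vector indexed by unordered pairs. Each edge $(u,v)$ adds $1$ to coordinate $\{u,v\}$, so the vector $\mathbf{1}-\bx$ has at most $k$ nonzeros. Equivalently, we start from the known all-ones vector, so that we recover the $\le k$ non-edges in $\tO(k)$ space. This gives $\skel$ exactly. We then answer each query $U$ with the same two identities, now using the undirected cut $\cC_{\skel}(U)$ and $\dout(v)=\deg_{\skel}(v)-\din(v)$.

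There is one subtlety, which I expect to be the main (mild) obstacle: the vector updated in the stream must fit the sparse-recovery framework. Possible duplicate insertions or 2-cycles are excluded here, since a $k$-close digraph is a subgraph of a tournament. So each pair coordinate takes values in $\{0,1\}$, and the deviation from all-ones is $k$-sparse. This makes the recovery deterministic and exact, so the oracle simulation is exact rather than approximate, which matters because the cut-query algorithm may be adaptive.
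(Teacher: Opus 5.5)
Your proof is correct and follows the same plan as the paper. During the pass you keep the indegrees (and, in the $k$-close case, the at most $k$ non-edges via sparse recovery), and afterwards you simulate the cut oracle exactly, using the fact that the cut profile is determined by $\skel$ and $\degs$.

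Where you differ is in how the oracle is simulated. The paper first calls \Cref{lem:main}~(2), a circulation/max-flow computation followed by an Euler tour, to materialise some $T'\in\cG[\skel,\degs]$. It then invokes \Cref{cor:cut-invariance} to conclude that cut queries on $T'$ agree with those on $T$. You skip the representative graph altogether and evaluate each query in closed form from \eqref{eq:lin1} and \eqref{eq:lin2}; for tournaments this reduces to $\inCut(U)=\sum_{v\in U}\din(v)-\binom{|U|}{2}$.

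Your route is more elementary, since no flow algorithm is needed. It is also lighter in post-processing: each query costs $O(n+k)$ time, and no graph of size $\Theta(n^2)$ is ever written down. The paper's route instead reuses the representative-graph machinery behind \Cref{thm:cycinv} and \Cref{cor:kclose}, so the lemma drops out of tools already in place. The explicit $T'$ also supports any cycle-reversal-invariant computation, not just cut evaluations, though this lemma does not need that extra generality.
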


\begin{proof}
    Given a tournament $T$, we can store its indegrees in a single pass using $O(n\log n)$ space. By \Cref{lem:main} (2), this would let us generate in polynomial time a tournament $T'$ in the same equivalence class as $T$. Then \Cref{cor:cut-invariance} says that $T'$ has the same cut profile as $T$, and so we can now access all cut queries on $T$ for free. 

    The ``more general'' statement follows along similar lines since we can recover the skeleton of a digraph that is $k$-close to tournament in $\tO(n+k)$ space (as described in the proof of \Cref{cor:kclose}).
\end{proof}

The contrapositive is also interesting: a lower bound ruling out semi-streaming algorithms for a tournament problem would imply its undecidability in the directed cut-query model.

\mypar{A separation between semi-streaming and cut queries} 
By \Cref{lem:qry-solve-semi-str}, the single-pass semi-streaming model is at least as strong as the cut query model for tournament problems. Can we show that it is {\em strictly} stronger? That is, can we exhibit a tournament problem that is solvable in single-pass semi-streaming but not in the cut-query model? The answer is yes, as long as the model allows randomness.

Recall that the FAST problem asks to find the minimum Feedback Arc Set in a Tournament, a set of arcs whose removal deletes all cycles. We are interested in the FAST-size.

\begin{restatable}{proposition}{separation}\label{prop:qrystrsep}
     The problem of finding a $(1+\eps)$-approximation to FAST-size has a single-pass semi-streaming (randomized) algorithm, but even a $o(n)$-approximation is not possible using cut queries. 
\end{restatable}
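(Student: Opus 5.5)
The proof has two independent halves: an upper bound in semi-streaming and an impossibility result for cut queries.

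\textbf{Streaming upper bound.} I would not design anything new here. I would invoke the known single-pass semi-streaming $(1+\eps)$-approximation for FAST-size due to Chakrabarti, Ghosh, McGregor, and Vorotnikova~\cite{ChakrabartiGMV20}, improved by Baweja, Jia, and Woodruff~\cite{BawejaJW22}. It stores a randomized sketch of $\tO(n)$ size and then runs an exponential-time post-processing step over orderings. Since the proposition asks only for existence, not polynomial time, citing this suffices.

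\textbf{Cut-query impossibility.} The key point is \Cref{cor:cut-invariance} together with \Cref{cor:tour-indeg}: every tournament with the same score sequence has the same cut profile. Hence any cut-query algorithm, even a randomized one with unboundedly many queries, behaves identically on two tournaments with equal score sequences. It therefore suffices to exhibit two tournaments $T_1,T_2$ on $n$ nodes with the same score sequence whose FAST-sizes differ by a factor $\omega(n)$ in ratio. Taking $n$ odd, I would use:
\begin{itemize}
\item $T_1$, the rotational (circulant) regular tournament on $\mathbb{Z}_n$, in which $i\to i+j$ for $j=1,\ldots,(n-1)/2$;
\item $T_2$, obtained from $T_1$ by reversing cycles so that FAST-size becomes large.
\end{itemize}
Both are regular, with every score equal to $(n-1)/2$.

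The ratio cannot come from FAST-sizes alone: every regular tournament has FAST-size $\Theta(n^2)$, since any ordering has total backward count at least $\sum_v$ of something and backward arcs $\geq \binom{n}{2}/2 - O(n^{3/2})$. So no gap exists within regular tournaments, and I would switch to non-regular score sequences. I would look for score sequences admitting both a tournament with FAST-size $1$ or $0$ and one with FAST-size $\gg 1$. FAST-size $0$ forces the transitive score sequence $(0,1,\ldots,n-1)$, which determines the tournament uniquely, so that gives nothing. Instead I would aim for FAST-size $1$ versus FAST-size $\Omega(n)$, which yields an $\Omega(n)$ gap, close to ruling out $o(n)$. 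Concretely, I would take the transitive tournament with a single backward arc $(n,1)$, which forms a long Hamiltonian cycle. Its score sequence is $(1,1,2,3,\ldots,n-3,n-2,n-2)$.

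The step I expect to be the main obstacle is constructing a second tournament $T_2$ with this score sequence and FAST-size $\Omega(n)$. I would try reversing many edge-disjoint triangles inside the transitive part: for instance, for each $i$ take $\{i,i+1,i+2\}$ together with the long cycle, reversing cycles so that $\Omega(n)$ arc-disjoint cycles remain. I would verify this lower bound using a packing of arc-disjoint cycles, since FAST-size is at least the size of any arc-disjoint cycle packing. With that in hand, any $\alpha$-approximation for $\alpha=o(n)$ must output values in $[1,\alpha]$ on $T_1$ and at least $\Omega(n)/\alpha\cdot\alpha$, which are incompatible since the output distributions on $T_1$ and $T_2$ are identical. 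That completes the argument.
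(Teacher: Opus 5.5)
\textbf{Gap: the second tournament $T_2$ is never constructed, and the route you sketch for it cannot work.}

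Your framework is the paper's: cite \cite{ChakrabartiGMV20} for streaming, use \Cref{cor:cut-invariance} so that equal-score tournaments are indistinguishable by cut queries, and take $T_1$ to be the transitive tournament with the single back arc $n\to 1$ (FAST-size $1$). The gap is exactly at the step you flag as the main obstacle.

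The triples $\{i,i+1,i+2\}$ are transitive in $T_1$, so there are no triangles to reverse ``inside the transitive part''. In fact every directed cycle of $T_1$ passes through the arc $n\to 1$, so no two cycles of $T_1$ are arc-disjoint. By \Cref{lem:main}~(1), every tournament with this score sequence therefore arises from $T_1$ by reversing a \emph{single} cycle $n\to 1\to v_2\to\cdots\to v_{k-1}\to n$ with $1<v_2<\cdots<v_{k-1}<n$, or by reversing nothing. ``Reversing many edge-disjoint triangles'' is not an available operation.

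The missing idea is to reverse the Hamiltonian cycle $1\to 2\to\cdots\to n\to 1$. In the resulting $T_2$, for each odd $i\le n-2$, the arcs $i\to i+2$, $i+2\to i+1$ and $i+1\to i$ form a directed triangle. (The arc $i\to i+2$ is untouched, or is the reversed arc $1\to n$ when $n=3$.) These triangles are pairwise arc-disjoint, and any feedback arc set must hit each one, so $\mathrm{FAST}(T_2)\ge\lfloor (n-1)/2\rfloor=\Omega(n)$. Without this explicit construction and its cycle packing, the impossibility half is unproved.

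\textbf{Two smaller corrections.}

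First, a ratio of $\omega(n)$ is neither needed nor attainable. A ratio of $cn$ already defeats every $\alpha=o(n)$: outputs must lie in $[1,\alpha]$ on $T_1$ and in $[cn,\alpha cn]$ on $T_2$, and these are disjoint once $\alpha<cn$. So $\Omega(n)$ does not merely come ``close to'' ruling out $o(n)$; it rules it out. Conversely, for any tournaments $T\sim T'$ we have $\mathrm{FAST}(T')\le \mathrm{OLA}(T')=\mathrm{OLA}(T)\le (n-1)\,\mathrm{FAST}(T)$. This holds because the OLA value depends only on the cut profile, and each back arc has stretch between $1$ and $n-1$. Hence the FAST ratio between equal-score tournaments never exceeds $n-1$.

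Second, the bound $\binom{n}{2}/2-O(n^{3/2})$ you quote for every ordering of a regular tournament is false: the rotational tournament has an ordering with only about $n^2/8$ back arcs. Your $\Theta(n^2)$ conclusion still holds, since prefix cuts equal $k(n-k)/2$ and each back arc has stretch at most $n-1$. The digression is not needed in any case.
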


\begin{proof}
The single-pass $(1+\eps)$-approximation semi-streaming algorithm was given by \cite{ChakrabartiGMV20}. 

For the impossibility, let $T$ be the tournament on vertex set $[n]$ defined as follows. Add the edge $n\rightarrow 1$. For the remaining pairs $(i,j)\in [n]\times [n]$ with $i<j$, add the edge $i\rightarrow j$. Let $T'$ be the tournament obtained by reversing the cycle $1\rightarrow 2 \rightarrow\ldots \rightarrow n \rightarrow 1$ in $T$. Hence, $T$ and $T'$ have the same cut profile and any cut-query algorithm must output the same answer for both of them.

    In $T$, the FAST-size is $1$ (deleting the edge $n\rightarrow 1$ makes it acyclic). But in $T'$, the FAST-size $=\Omega(n)$ since $T'$ has $\Omega(n)$ edge-disjoint triangles: for each odd $i\leq n-2$, we have the triangle $i\rightarrow i+2 \rightarrow i+1 \rightarrow i$. Hence, $\alpha$-approximate FAST-size for $\alpha=o(n)$ is undecidable by cut queries.  
\end{proof}

    \section{Conclusions \& Future Directions}

 In this paper, we characterised the class of all digraph problems decidable by only its skeleton and indegrees, and consequently the class of all tournament problems decidable by only its score sequence. We also developed a polynomial-time algorithm to generate a feasible digraph satisfying a given skeleton and indegree sequence. We then exhibited a variety of applications of these results in several domains such as streaming algorithms, cut-query algorithms, and communication complexity. The complete characterisation should be influential in future research on these properties, from either an algorithmic or a combinatorial point of view.

One of the immediate future directions is to show that directed minimum cut can be solved by $\tO(n)$ cut queries or, towards hardness, requires $\Omega(n^2)$ queries. Either result would be a breakthrough as directed minimum cut is a great candidate for proving a superlinear $\SFM$ lower bound---an outstanding question that has remained open for the last five decades \cite{blikstad21, ChakrabartyGJS23, chakrabarty2019faster, Harvey08}. Because cut queries cannot fully specify edge directions, we conjecture that any improvement of the form $O(n^{2-\epsilon})$ will require designing a minimum-cut algorithm that does not crucially use the individual edge directions. This will be a departure from the standard combinatorial ways of solving this problem and will be of interest.

Another intriguing future direction is to design streaming algorithms for tournament problems that are {\em not} invariant under cycle reversal. We showed that one can obtain the entire cut profile of a tournament in semi-streaming space. Combining this strong tool with other useful sketching and sampling techniques known in the streaming literature, can we resolve the streaming complexities of important tournament problems that remain open? A couple of these problems are (approximate) FAST-ordering (an ordering of nodes that minimises the number of induced back edges) and finding Hamiltonian paths and cycles. For the former problem, there is a single-pass semi-streaming $(1+\eps)$-approximation algorithm by \cite{ChakrabartiGMV20} that takes exponential time. \cite{BawejaJW22} gave a polynomial-time algorithm with the same space and approximation factor but with $O(\log n)$ passes. The best-known polynomial-time single-pass semi-streaming algorithm achieves a $5$-approximation simply by sorting with respect to indegrees \cite{CoppersmithFR10}. Can we improve upon this approximation factor in polynomial time and a single pass?

Finally, we remark that the complexity of each digraph problem studied in this paper is unsettled for general digraphs in the multipass streaming model. In fact, even after a series of works \cite{AssadiJJST22, AssadiR20, ChakrabartiGMV20, ChenKPSSY21-lognpass, GuruswamiO13} on reachability, there is an exponential gap in the pass-complexity of semi-streaming algorithms for the problem, and it is one of the biggest open problems in multipass streaming (see the survey \cite{AssadiMultipassSurvey}). Our theorem for arbitrary digraphs provides new insights into the problem, and coupled with sophisticated ideas, might lead to improved multipass algorithms for $s,t$-reachability and related problems on general digraphs.

\section*{Acknowledgements}

We thank Deeparnab Chakrabarty for helpful discussions and for pointing us to relevant prior works by Cunningham \cite{Cun85, Cunningham85}. We are also grateful to an anonymous reviewer for referring us to the paper \cite{FrankGyarfas1978} by Frank and Gy\'{a}rf\'{a}s. Finally, we thank anonymous ESA 2026 reviewers for their suggestions and comments that helped in improving the presentation of the paper.

\bibliography{refs, biblio, references}

@article{Tardos85,
  author       = {{\'{E}}va Tardos},
  title        = {A strongly polynomial minimum cost circulation algorithm},
  journal      = {Comb.},
  volume       = {5},
  number       = {3},
  pages        = {247--256},
  year         = {1985}
}

@article{Tarjan91,
  author       = {Robert E. Tarjan},
  title        = {Efficiency of the Primal Network Simplex Algorithm for the Minimum-Cost
                  Circulation Problem},
  journal      = {Math. Oper. Res.},
  volume       = {16},
  number       = {2},
  pages        = {272--291},
  year         = {1991}
}

@article{GoldbergT90,
  author       = {Andrew V. Goldberg and
                  Robert E. Tarjan},
  title        = {Finding Minimum-Cost Circulations by Successive Approximation},
  journal      = {Math. Oper. Res.},
  volume       = {15},
  number       = {3},
  pages        = {430--466},
  year         = {1990}
}

@inproceedings{ChakrabartyGJS22,
  author    = {Deeparnab Chakrabarty and
               Andrei Graur and
               Haotian Jiang and
               Aaron Sidford},
  title     = {Improved Lower Bounds for Submodular Function Minimization},
  booktitle = {{FOCS}},
  pages     = {245--254},
  publisher = {{IEEE}},
  doi       = {10.1109/FOCS54457.2022.00030},
  year      = {2022}
}

@inproceedings{ApersEGLMN22,
  author    = {Simon Apers and
               Yuval Efron and
               Pawel Gawrychowski and
               Troy Lee and
               Sagnik Mukhopadhyay and
               Danupon Nanongkai},
  title     = {Cut Query Algorithms with Star Contraction},
  booktitle = {{FOCS}},
  pages     = {507--518},
  publisher = {{IEEE}},
  doi       = {10.1109/FOCS54457.2022.00055},
  year      = {2022}
}

@inproceedings{blikstadBEMN22,
  author    = {Joakim Blikstad and
               Jan van den Brand and
               Yuval Efron and
               Sagnik Mukhopadhyay and
               Danupon Nanongkai},
  title     = {Nearly Optimal Communication and Query Complexity of Bipartite Matching},
  booktitle = {{FOCS}},
  pages     = {1174--1185},
  publisher = {{IEEE}},
  doi       = {10.1109/FOCS54457.2022.00113},
  year      = {2022}
}

@inproceedings{blikstad21,
  author    = {Joakim Blikstad},
  title     = {Breaking O(nr) for Matroid Intersection},
  doi       = {10.4230/LIPIcs.ICALP.2021.31},
  booktitle = {{ICALP}},
  series    = {LIPIcs},
  volume    = {198},
  pages     = {31:1--31:17},
  publisher = {Schloss Dagstuhl - Leibniz-Zentrum f{\"{u}}r Informatik},
  year      = {2021}
}

@inproceedings{chakrabarty2019faster,
  author    = {Deeparnab Chakrabarty and
               Yin Tat Lee and
               Aaron Sidford and
               Sahil Singla and
               Sam Chiu{-}wai Wong},
  title     = {Faster Matroid Intersection},
  doi       = {10.1109/FOCS.2019.00072},
  booktitle = {{FOCS}},
  pages     = {1146--1168},
  publisher = {{IEEE} Computer Society},
  year      = {2019}
}

@inproceedings{AssadiJJST22,
  author    = {Sepehr Assadi and
               Arun Jambulapati and
               Yujia Jin and
               Aaron Sidford and
               Kevin Tian},
  title     = {Semi-Streaming Bipartite Matching in Fewer Passes and Optimal Space},
  booktitle = {{SODA}},
  pages     = {627--669},
  publisher = {{SIAM}},
  year      = {2022}
}

@inproceedings{AssadiCK19,
  author    = {Sepehr Assadi and
               Yu Chen and
               Sanjeev Khanna},
  title     = {Polynomial pass lower bounds for graph streaming algorithms},
  booktitle = {{STOC}},
  pages     = {265--276},
  publisher = {{ACM}},
  year      = {2019}
}

@inproceedings{ChenKPS0Y21,
  author    = {Lijie Chen and
               Gillat Kol and
               Dmitry Paramonov and
               Raghuvansh R. Saxena and
               Zhao Song and
               Huacheng Yu},
  title     = {Almost optimal super-constant-pass streaming lower bounds for reachability},
  booktitle = {{STOC}},
  pages     = {570--583},
  publisher = {{ACM}},
  year      = {2021}
}

@inproceedings{RubinsteinSW18,
  author    = {Aviad Rubinstein and
               Tselil Schramm and
               S. Matthew Weinberg},
  title     = {Computing Exact Minimum Cuts Without Knowing the Graph},
  booktitle = {{ITCS}},
  series    = {LIPIcs},
  volume    = {94},
  pages     = {39:1--39:16},
  publisher = {Schloss Dagstuhl - Leibniz-Zentrum f{\"{u}}r Informatik},
  doi       = {10.4230/LIPIcs.ITCS.2018.39},
  year      = {2018}
}

@inproceedings{LeeSW15,
  author    = {Yin Tat Lee and
               Aaron Sidford and
               Sam Chiu{-}wai Wong},
  title     = {A Faster Cutting Plane Method and its Implications for Combinatorial
               and Convex Optimization},
  booktitle = {{FOCS}},
  pages     = {1049--1065},
  publisher = {{IEEE} Computer Society},
  doi       = {10.1109/FOCS.2015.68},
  year      = {2015}
}

@inproceedings{MukhopadhyayN20,
  author    = {Sagnik Mukhopadhyay and
               Danupon Nanongkai},
  _editor    = {Konstantin Makarychev and
               Yury Makarychev and
               Madhur Tulsiani and
               Gautam Kamath and
               Julia Chuzhoy},
  title     = {Weighted min-cut: sequential, cut-query, and streaming algorithms},
  booktitle = {{STOC}},
  pages     = {496--509},
  publisher = {{ACM}},
  doi       = {10.1145/3357713.3384334},
  year      = {2020}
}

@inproceedings{ChakrabartyL23,
  author       = {Deeparnab Chakrabarty and
                  Hang Liao},
  title        = {A Query Algorithm for Learning a Spanning Forest in Weighted Undirected
                  Graphs},
  booktitle    = {{ALT}},
  series       = {Proceedings of Machine Learning Research},
  volume       = {201},
  pages        = {259--274},
  publisher    = {{PMLR}},
  year         = {2023}
}

@inproceedings{ChakrabartyGJS23,
 author = {Chakrabarty, Deeparnab and Graur, Andrei and Jiang, Haotian and Sidford, Aaron},
 booktitle = {Advances in Neural Information Processing Systems},
 pages = {68639--68654},
 publisher = {Curran Associates, Inc.},
 title = {Parallel Submodular Function Minimization},
 url = {https://proceedings.neurips.cc/paper_files/paper/2023/file/d8a7f2f7e346410e8ac7b39d9ff28c4a-Paper-Conference.pdf},
 volume = {36},
 year = {2023}
}

@article{Jiang23,
  author       = {Haotian Jiang},
  title        = {Minimizing Convex Functions with Rational Minimizers},
  journal      = {J. {ACM}},
  volume       = {70},
  number       = {1},
  pages        = {5:1--5:27},
  year         = {2023}
}

@inproceedings{AxelrodLS20,
  author       = {Brian Axelrod and
                  Yang P. Liu and
                  Aaron Sidford},
  title        = {Near-optimal Approximate Discrete and Continuous Submodular Function
                  Minimization},
  booktitle    = {{SODA}},
  pages        = {837--853},
  publisher    = {{SIAM}},
  year         = {2020}
}

@article{Vygen03,
  author       = {Jens Vygen},
  title        = {A note on Schrijver's submodular function minimization algorithm},
  journal      = {J. Comb. Theory, Ser. {B}},
  volume       = {88},
  number       = {2},
  pages        = {399--402},
  year         = {2003}
}

@article{Orlin09,
  author       = {James B. Orlin},
  title        = {A faster strongly polynomial time algorithm for submodular function
                  minimization},
  journal      = {Math. Program.},
  volume       = {118},
  number       = {2},
  pages        = {237--251},
  year         = {2009}
}

@inproceedings{IwataO09,
  author       = {Satoru Iwata and
                  James B. Orlin},
  title        = {A simple combinatorial algorithm for submodular function minimization},
  booktitle    = {{SODA}},
  pages        = {1230--1237},
  publisher    = {{SIAM}},
  year         = {2009}
}

@article{Iwata03,
  author       = {Satoru Iwata},
  title        = {A Faster Scaling Algorithm for Minimizing Submodular Functions},
  journal      = {{SIAM} J. Comput.},
  volume       = {32},
  number       = {4},
  pages        = {833--840},
  year         = {2003}
}

@article{Schrijver00,
  author       = {Alexander Schrijver},
  title        = {A Combinatorial Algorithm Minimizing Submodular Functions in Strongly
                  Polynomial Time},
  journal      = {J. Comb. Theory, Ser. {B}},
  volume       = {80},
  number       = {2},
  pages        = {346--355},
  year         = {2000}
}

@article{IwataFF01,
  author       = {Satoru Iwata and
                  Lisa Fleischer and
                  Satoru Fujishige},
  title        = {A combinatorial strongly polynomial algorithm for minimizing submodular
                  functions},
  journal      = {J. {ACM}},
  volume       = {48},
  number       = {4},
  pages        = {761--777},
  year         = {2001}
}

@article{FleischerI03,
  author       = {Lisa Fleischer and
                  Satoru Iwata},
  title        = {A push-relabel framework for submodular function minimization and
                  applications to parametric optimization},
  journal      = {Discret. Appl. Math.},
  volume       = {131},
  number       = {2},
  pages        = {311--322},
  year         = {2003}
}

@article{Cunningham85,
  author       = {William H. Cunningham},
  title        = {On submodular function minimization},
  journal      = {Comb.},
  volume       = {5},
  number       = {3},
  pages        = {185--192},
  year         = {1985}
}

@book{GLS1988,
  author       = {Martin Gr{\"{o}}tschel and
                  L{\'{a}}szl{\'{o}} Lov{\'{a}}sz and
                  Alexander Schrijver},
  title        = {Geometric Algorithms and Combinatorial Optimization},
  series       = {Algorithms and Combinatorics},
  volume       = {2},
  publisher    = {Springer},
  year         = {1988}
}

@inproceedings{ChakrabartyLS0W19,
  author       = {Deeparnab Chakrabarty and
                  Yin Tat Lee and
                  Aaron Sidford and
                  Sahil Singla and
                  Sam Chiu{-}wai Wong},
  title        = {Faster Matroid Intersection},
  booktitle    = {{FOCS}},
  pages        = {1146--1168},
  publisher    = {{IEEE} Computer Society},
  year         = {2019}
}

@inproceedings{GraurPRW20,
  author    = {Andrei Graur and
               Tristan Pollner and
               Vidhya Ramaswamy and
               S. Matthew Weinberg},
  title     = {New Query Lower Bounds for Submodular Function Minimization},
  booktitle = {{ITCS}},
  series    = {LIPIcs},
  volume    = {151},
  pages     = {64:1--64:16},
  publisher = {Schloss Dagstuhl - Leibniz-Zentrum f{\"{u}}r Informatik},
  year      = {2020}
}

@inproceedings{Jiang21,
  author    = {Haotian Jiang},
  title     = {Minimizing Convex Functions with Integral Minimizers},
  booktitle = {{SODA}},
  pages     = {976--985},
  publisher = {{SIAM}},
  year      = {2021}
}

@inproceedings{Harvey08,
  author    = {Nicholas J. A. Harvey},
  title     = {Matroid intersection, pointer chasing, and Young's seminormal representation
               of \emph{S\({}_{\mbox{n}}\)}},
  booktitle = {{SODA}},
  pages     = {542--549},
  publisher = {{SIAM}},
  year      = {2008}
}

@article{Zelke11,
  author =        {Mariano Zelke},
  journal =       {Information Processing Letters},
  number =        {3},
  pages =         {145--150},
  title =         {Intractability of min- and max-cut in streaming
                   graphs},
  volume =        {111},
  year =          {2011},
}

@inproceedings{Rosti-etal-2007-combining,
    title = "Combining Outputs from Multiple Machine Translation Systems",
    author = "Rosti, Antti-Veikko  and
      Ayan, Necip Fazil  and
      Xiang, Bing  and
      Matsoukas, Spyros  and
      Schwartz, Richard  and
      Dorr, Bonnie",
    booktitle = "Human Language Technologies 2007: The Conference of the North {A}merican Chapter of the Association for Computational Linguistics; Proceedings of the Main Conference",
    month = apr,
    year = "2007",
    address = "Rochester, New York",
    publisher = "Association for Computational Linguistics",
    url = "https://aclanthology.org/N07-1029/",
    pages = "228--235"
}

@inproceedings{Kenneth-MordochK26SODA,
  author       = {Yotam Kenneth{-}Mordoch and
                  Robert Krauthgamer},
  editor       = {Kasper Green Larsen and
                  Barna Saha},
  title        = {All-Pairs Minimum Cut using {$\tO$}(n\({}^{\mbox{7/4}}\)) Cut Queries},
  booktitle    = {Proceedings of the 2026 Annual {ACM-SIAM} Symposium on Discrete Algorithms,
                  {SODA} 2026, Vancouver, BC, Canada, January 11-14, 2026},
  pages        = {4077--4095},
  publisher    = {{SIAM}},
  year         = {2026},
  url          = {https://doi.org/10.1137/1.9781611978971.150},
  doi          = {10.1137/1.9781611978971.150},
  bibsource    = {dblp computer science bibliography, https://dblp.org}
}

@inproceedings{Kenneth-MordochK26STOC,
  author       = {Yotam Kenneth{-}Mordoch and
                  Robert Krauthgamer},
  title        = {Faster All-Pairs Minimum Cut: Bypassing Exact Max-Flow},
  booktitle    = {Proceedings of the 58th Annual {ACM} Symposium on Theory of Computing,
                  {STOC} 2026, Salt Lake City, UT, USA, June 22-26, 2026},
  pages        = {1254--1265},
  publisher    = {{ACM}},
  year         = {2026},
  url          = {https://doi.org/10.1145/3798129.3800836},
  doi          = {10.1145/3798129.3800836}
}

@inproceedings{JiangNS26,
  author       = {Yonggang Jiang and
                  Danupon Nanongkai and
                  Pachara Sawettamalya},
  title        = {Minimum s--t Cuts with Fewer Cut Queries},
  booktitle    = {Proceedings of the 2026 Annual {ACM-SIAM} Symposium on Discrete Algorithms,
                  {SODA} 2026, Vancouver, BC, Canada, January 11-14, 2026},
  pages        = {258--296},
  publisher    = {{SIAM}},
  year         = {2026},
  url          = {https://doi.org/10.1137/1.9781611978971.12},
  doi          = {10.1137/1.9781611978971.12}
}

@inproceedings{Kenneth-MordochK25ESA,
  author       = {Yotam Kenneth{-}Mordoch and
                  Robert Krauthgamer},
  title        = {Cut-Query Algorithms with Few Rounds},
  booktitle    = {33rd Annual European Symposium on Algorithms, {ESA} 2025, Warsaw,
                  Poland, September 15-17, 2025},
  series       = {LIPIcs},
  volume       = {351},
  pages        = {100:1--100:14},
  publisher    = {Schloss Dagstuhl - Leibniz-Zentrum f{\"{u}}r Informatik},
  year         = {2025},
  url          = {https://doi.org/10.4230/LIPIcs.ESA.2025.100},
  doi          = {10.4230/LIPICS.ESA.2025.100}
}

@string(TCS = "Theor. Comput. Sci.")

@STRING{ccc	= "Annual IEEE Conference on Computational Complexity" }

@STRING{esa	= "Annual European Symposium on Algorithms" }

@STRING{focs	= "Annual IEEE Symposium on Foundations of Computer Science" }

@STRING{icalp	= "International Colloquium on Automata, Languages and
		  Programming" }

@STRING{prelim	= "Preliminary version in " }

@STRING{proc11	= "Proc. 11th " }

@STRING{proc31	= "Proc. 31st " }

@STRING{soda	= "Annual ACM-SIAM Symposium on Discrete Algorithms" }

@STRING{stoc	= "Annual ACM Symposium on the Theory of Computing" }

@STRING{tcs	= "Theor. Comput. Sci." }

@Article{	  feigenbaumkmsz05,
  author	= {Joan Feigenbaum and Sampath Kannan and Andrew McGregor and
		  Siddharth Suri and Jian Zhang},
  title		= {On graph problems in a semi-streaming model},
  journal	= tcs,
  year		= 2005,
  volume	= 348,
  number	= {2--3},
  pages		= {207--216},
  note		= prelim # "\em "  # proc31 # icalp
		# "\em\/, pages 531--543, 2004"
}

@article{AssadiMultipassSurvey,
author = {Assadi, Sepehr},
title = {Recent Advances in Multi-Pass Graph Streaming Lower Bounds},
year = {2023},
issue_date = {September 2023},
publisher = {Association for Computing Machinery},
address = {New York, NY, USA},
volume = {54},
number = {3},
issn = {0163-5700},
url = {https://doi.org/10.1145/3623800.3623808},
doi = {10.1145/3623800.3623808},
journal = {ACM SIGACT News},
month = {Sep},
pages = {48–75},
numpages = {28}
}

@InProceedings{	  yao79,
  author	= {Andrew C. Yao},
  title		= {Some complexity questions related to distributive
		  computing},
  booktitle	= proc11 # stoc,
  pages		= {209--213},
  year		= 1979
}

@article{CoppersmithFR10,
  author    = {Don Coppersmith and
               Lisa Fleischer and
               Atri Rudra},
  title     = {Ordering by weighted number of wins gives a good ranking for weighted
               tournaments},
  journal   = {{ACM} Trans. Algorithms},
  volume    = {6},
  number    = {3},
  pages     = {55:1--55:13},
  year      = {2010},
  url       = {https://doi.org/10.1145/1798596.1798608},
  doi       = {10.1145/1798596.1798608},
  bibsource = {dblp computer science bibliography, https://dblp.org}
}

@inproceedings{ChakrabartiGMV20,
  author       = {Amit Chakrabarti and
                  Prantar Ghosh and
                  Andrew McGregor and
                  Sofya Vorotnikova},
  title        = {Vertex Ordering Problems in Directed Graph Streams},
  booktitle    = {Proceedings of the 2020 {ACM-SIAM} Symposium on Discrete Algorithms,
                  {SODA} 2020, Salt Lake City, UT, USA, January 5-8, 2020},
  pages        = {1786--1802},
  publisher    = {{SIAM}},
  year         = {2020},
  doi          = {10.1137/1.9781611975994.109}
}

@inproceedings{BawejaJW22,
  author       = {Anubhav Baweja and
                  Justin Jia and
                  David P. Woodruff},
  title        = {An Efficient Semi-Streaming {PTAS} for Tournament Feedback Arc Set
                  with Few Passes},
  booktitle    = {13th Innovations in Theoretical Computer Science Conference, {ITCS}
                  2022, January 31 - February 3, 2022, Berkeley, CA, {USA}},
  series       = {LIPIcs},
  volume       = {215},
  pages        = {16:1--16:23},
  publisher    = {Schloss Dagstuhl - Leibniz-Zentrum f{\"{u}}r Informatik},
  year         = {2022},
  doi          = {10.4230/LIPIcs.ITCS.2022.16}
}

@inproceedings{AssadiD21,
  author       = {Sepehr Assadi and
                  Aditi Dudeja},
  title        = {A Simple Semi-Streaming Algorithm for Global Minimum Cuts},
  booktitle    = {4th Symposium on Simplicity in Algorithms, {SOSA} 2021, Virtual Conference,
                  January 11-12, 2021},
  pages        = {172--180},
  publisher    = {{SIAM}},
  year         = {2021},
  url          = {https://doi.org/10.1137/1.9781611976496.19},
  doi          = {10.1137/1.9781611976496.19},
  bibsource    = {dblp computer science bibliography, https://dblp.org}
}

@inproceedings{GuruswamiO13,
  author       = {Venkatesan Guruswami and
                  Krzysztof Onak},
  title        = {Superlinear Lower Bounds for Multipass Graph Processing},
  booktitle    = {Proceedings of the 28th Conference on Computational Complexity, {CCC}
                  2013, K.lo Alto, California, USA, 5-7 June, 2013},
  pages        = {287--298},
  publisher    = {{IEEE} Computer Society},
  year         = {2013},
  url          = {https://doi.org/10.1109/CCC.2013.37},
  doi          = {10.1109/CCC.2013.37}
}

@inproceedings{AssadiR20,
  author       = {Sepehr Assadi and
                  Ran Raz},
  title        = {Near-Quadratic Lower Bounds for Two-Pass Graph Streaming Algorithms},
  booktitle    = {61st {IEEE} Annual Symposium on Foundations of Computer Science, {FOCS}
                  2020, Durham, NC, USA, November 16-19, 2020},
  pages        = {342--353},
  publisher    = {{IEEE}},
  year         = {2020},
  url          = {https://doi.org/10.1109/FOCS46700.2020.00040},
  doi          = {10.1109/FOCS46700.2020.00040}
}

@inproceedings{ChenKPSSY21-lognpass,
  author       = {Lijie Chen and
                  Gillat Kol and
                  Dmitry Paramonov and
                  Raghuvansh R. Saxena and
                  Zhao Song and
                  Huacheng Yu},
  title        = {Almost optimal super-constant-pass streaming lower bounds for reachability},
  booktitle    = {{STOC} '21: 53rd Annual {ACM} {SIGACT} Symposium on Theory of Computing,
                  Virtual Event, Italy, June 21-25, 2021},
  pages        = {570--583},
  publisher    = {{ACM}},
  year         = {2021},
  url          = {https://doi.org/10.1145/3406325.3451038},
  doi          = {10.1145/3406325.3451038}
}

@InProceedings{LauraS11,
author="Laura, Luigi
and Santaroni, Federico",
title="Computing Strongly Connected Components in the Streaming Model",
booktitle="Theory and Practice of Algorithms in (Computer) Systems",
year="2011",
publisher="Springer Berlin Heidelberg",
pages="193--205"
}

@book{Moon1968TopicsOT,
author = {Moon, John W.},
title = {Topics on tournaments},
publisher = {Holt, Rinehart and Winston},
series = {Athena series; selected topics in mathematics},
year = {1968},
}

@inproceedings{GengLQALS08,
author = {Geng, Xiubo and Liu, Tie-Yan and Qin, Tao and Arnold, Andrew and Li, Hang and Shum, Heung-Yeung},
title = {Query dependent ranking using K-nearest neighbor},
year = {2008},
publisher = {Association for Computing Machinery},
doi = {10.1145/1390334.1390356},
booktitle = {Proceedings of the 31st Annual International ACM SIGIR Conference on Research and Development in Information Retrieval},
pages = {115–122},
numpages = {8},
series = {SIGIR '08}
}

@inproceedings{MandePSS24,
  author       = {Nikhil S. Mande and
                  Manaswi Paraashar and
                  Swagato Sanyal and
                  Nitin Saurabh},
  title        = {On the Communication Complexity of Finding a King in a Tournament},
  booktitle    = {Approximation, Randomization, and Combinatorial Optimization. Algorithms
                  and Techniques, {APPROX/RANDOM} 2024, August 28-30, 2024, London School
                  of Economics, London, {UK}},
  series       = {LIPIcs},
  volume       = {317},
  pages        = {64:1--64:23},
  publisher    = {Schloss Dagstuhl - Leibniz-Zentrum f{\"{u}}r Informatik},
  year         = {2024},
  url          = {https://doi.org/10.4230/LIPIcs.APPROX/RANDOM.2024.64},
  doi          = {10.4230/LIPICS.APPROX/RANDOM.2024.64}
}

@Article{GilbertI10,
  author    = {Anna C. Gilbert and Piotr Indyk},
  title     = {Sparse Recovery Using Sparse Matrices},
  journal   = {Proceedings of the {IEEE}},
  volume    = 98,
  number    = 6,
  pages     = {937--947},
  year      = 2010
}

@INPROCEEDINGS{DasV13,
  author={Das, Abhik Kumar and Vishwanath, Sriram},
  booktitle={2013 IEEE International Conference on Acoustics, Speech and Signal Processing}, 
  title={On finite alphabet compressive sensing}, 
  year={2013},
  volume={},
  number={},
  pages={5890-5894},
  doi={10.1109/ICASSP.2013.6638794}}

@article{GassTournaments,
author = {S I Gass},
title = {Tournaments, transitivity and pairwise comparison matrices},
journal = {Journal of the Operational Research Society},
volume = {49},
number = {6},
pages = {616--624},
year = {1998},
publisher = {Taylor \& Francis},
doi = {10.1057/palgrave.jors.2600572}}

@inproceedings{GhoshK24,
  author       = {Prantar Ghosh and
                  Sahil Kuchlous},
  title        = {New Algorithms and Lower Bounds for Streaming Tournaments},
  booktitle    = {32nd Annual European Symposium on Algorithms, {ESA} 2024, September
                  2-4, 2024, Royal Holloway, London, United Kingdom},
  series       = {LIPIcs},
  volume       = {308},
  pages        = {60:1--60:19},
  publisher    = {Schloss Dagstuhl - Leibniz-Zentrum f{\"{u}}r Informatik},
  year         = {2024},
  url          = {https://doi.org/10.4230/LIPIcs.ESA.2024.60},
  doi          = {10.4230/LIPICS.ESA.2024.60}
}

@inproceedings{BarberoPP17,
  author       = {Florian Barbero and
                  Christophe Paul and
                  Michal Pilipczuk},
  title        = {Exploring the Complexity of Layout Parameters in Tournaments and Semi-Complete
                  Digraphs},
  booktitle    = {44th International Colloquium on Automata, Languages, and Programming,
                  {ICALP} 2017, July 10-14, 2017, Warsaw, Poland},
  series       = {LIPIcs},
  volume       = {80},
  pages        = {70:1--70:13},
  publisher    = {Schloss Dagstuhl - Leibniz-Zentrum f{\"{u}}r Informatik},
  year         = {2017},
  url          = {https://doi.org/10.4230/LIPIcs.ICALP.2017.70},
  doi          = {10.4230/LIPICS.ICALP.2017.70}
}

@phdthesis{Fradkin-thesis11,
author = {Fradkin, Alexandra Ovetsky},
advisor = {Chudnovsky, Maria and Seymour, Paul},
title = {Forbidden structures and algorithms in graphs and digraphs},
year = {2011},
isbn = {9781124735511},
school = {Princeton University}
}

@book{fleischner1990eulerian,
  title={Eulerian Graphs and Related Topics},
  author={Fleischner, H.},
  number={pt. 1, v. 2},
  lccn={90007300},
  series={Annals of discrete mathematics},
  url={https://books.google.com/books?id=6Df1xQEACAAJ},
  year={1990},
  publisher={North-Holland}
}

@InProceedings{fp13,
author="Fomin, Fedor V.
and Pilipczuk, Micha{\l}",
editor="Bodlaender, Hans L.
and Italiano, Giuseppe F.",
title="Subexponential Parameterized Algorithm for Computing the Cutwidth of a Semi-complete Digraph",
booktitle="Algorithms -- ESA 2013",
year="2013",
publisher="Springer Berlin Heidelberg",
address="Berlin, Heidelberg",
pages="505--516"
}

@InProceedings{bfis25,
  author =	{Bentert, Matthias and Fomin, Fedor V. and Inamdar, Tanmay and Saurabh, Saket},
  title =	{{Exponential-Time Approximation (Schemes) for Vertex-Ordering Problems}},
  booktitle =	{16th Innovations in Theoretical Computer Science Conference (ITCS 2025)},
  pages =	{15:1--15:18},
  series =	{Leibniz International Proceedings in Informatics (LIPIcs)},
  ISBN =	{978-3-95977-361-4},
  ISSN =	{1868-8969},
  year =	{2025},
  volume =	{325},
  editor =	{Meka, Raghu},
  publisher =	{Schloss Dagstuhl -- Leibniz-Zentrum f{\"u}r Informatik},
  address =	{Dagstuhl, Germany},
  URL =		{https://drops.dagstuhl.de/entities/document/10.4230/LIPIcs.ITCS.2025.15},
  URN =		{urn:nbn:de:0030-drops-226431},
  doi =		{10.4230/LIPIcs.ITCS.2025.15}
}

@misc{devos2025maximumlineararrangementproblem,
      title={A Maximum Linear Arrangement Problem on Directed Graphs}, 
      author={Matt DeVos and Kathryn Nurse},
      year={2025},
      eprint={1810.12277},
      archivePrefix={arXiv},
      primaryClass={cs.DS},
      url={https://arxiv.org/abs/1810.12277}, 
}

@inproceedings{ClaessonDF23,
author = {Claesson, Anders and Dukes, Mark and Franklín, Atli and Stefánsson, Sigurður},
year = {2023},
month = {01},
pages = {290-297},
title = {Counting tournament score sequences},
booktitle = {Proceedings of the American Mathematical Society},
doi = {10.5817/CZ.MUNI.EUROCOMB23-040}
}

@article{Narayana_Bent_1964, 
title={Computation of the Number of Score Sequences in Round-Robin Tournaments}, 
volume={7}, 
DOI={10.4153/CMB-1964-015-1},
number={1}, 
journal={Canadian Mathematical Bulletin}, 
author={Narayana, T.V. and Bent, D.H.}, 
year={1964}, 
pages={133–136}
}

@article{Moon_1962, 
title={On the Score Sequence of an N-Partite Tournament}, 
volume={5}, 
DOI={10.4153/CMB-1962-008-9},
number={1}, 
journal={Canadian Mathematical Bulletin}, 
author={Moon, J.W.}, 
year={1962}, 
pages={51–58}
}

@article{RIORDAN1968,
title = {The number of score sequences in tournaments},
journal = {Journal of Combinatorial Theory},
volume = {5},
number = {1},
pages = {87-89},
year = {1968},
issn = {0021-9800},
doi = {https://doi.org/10.1016/S0021-9800(68)80032-8},
url = {https://www.sciencedirect.com/science/article/pii/S0021980068800328},
author = {John Riordan}
}

@article{Kleitman1981ForestsAS,
  title={Forests and score vectors},
  author={Daniel J. Kleitman and Kenneth J. Winston},
  journal={Combinatorica},
  year={1981},
  volume={1},
  pages={49-54},
  url={https://api.semanticscholar.org/CorpusID:7639560}
}

@misc{stockmeyer2023counting,
      title={Counting Various Classes of Tournament Score Sequences}, 
      author={Paul K. Stockmeyer},
      year={2023},
      eprint={2202.05238},
      archivePrefix={arXiv},
      primaryClass={math.CO},
      url={https://arxiv.org/abs/2202.05238}, 
}

@inproceedings{Brand0PKLGSS23,
  author       = {Jan van den Brand and
                  Li Chen and
                  Richard Peng and
                  Rasmus Kyng and
                  Yang P. Liu and
                  Maximilian Probst Gutenberg and
                  Sushant Sachdeva and
                  Aaron Sidford},
  title        = {A Deterministic Almost-Linear Time Algorithm for Minimum-Cost Flow},
  booktitle    = {64th {IEEE} Annual Symposium on Foundations of Computer Science, {FOCS}
                  2023, Santa Cruz, CA, USA, November 6-9, 2023},
  pages        = {503--514},
  publisher    = {{IEEE}},
  year         = {2023},
  url          = {https://doi.org/10.1109/FOCS57990.2023.00037},
  doi          = {10.1109/FOCS57990.2023.00037}
}

@article{Cun85,
author = {Cunningham, William H.},
title = {Minimum cuts, modular functions, and matroid polyhedra},
journal = {Networks},
volume = {15},
number = {2},
pages = {205-215},
doi = {https://doi.org/10.1002/net.3230150206},
url = {https://onlinelibrary.wiley.com/doi/abs/10.1002/net.3230150206},
eprint = {https://onlinelibrary.wiley.com/doi/pdf/10.1002/net.3230150206},
year = {1985}
}

@article{FrankGyarfas1978,
  title = {How to orient the edges of a graph?},
  author = {Frank, Andr\'{a}s and Gy\'{a}rf\'{a}s, Andr\'{a}s},
  journal = {Combinatorics},
  series = {Colloquia Mathematica Societatis János Bolyai},
  volume = {18},
  pages = {353--364},
  year = {1978},
  publisher = {North-Holland},
  address = {Amsterdam}
}

@article{ChenCLT26Independence,
  author       = {Ho{-}Lin Chen and
                  Tsun Ming Cheung and
                  Peng{-}Ting Lin and
                  Meng{-}Tsung Tsai},
  title        = {Independence-Number Parameterized Space Complexity for Directed Connectivity
                  Certificate},
  journal      = {CoRR},
  volume       = {abs/2602.12668},
  year         = {2026},
  doi          = {10.48550/ARXIV.2602.12668}
}

@article{ChakrabartyCK22,
  author={Deeparnab Chakrabarty and Yu Chen and Sanjeev Khanna}, 
  title={Notes on Directed Cut Oracle}, 
  journal={Unpublished},
  year={2022}
  }
\bibliographystyle{alpha}

    \end{document}